\newtheorem{lemma}{Lemma}
\newtheorem{remark}{Remark}
\newtheorem{theorem}{Theorem}
\begin{document}

\preprint{AIP/123-QED}

\title{Filtering Dynamical Systems Using Observations of Statistics}

\date{\today}

\author{Eviatar Bach}
\email{eviatarbach@protonmail.com}
\affiliation{Department of Environmental Science and Engineering, California Institute of Technology, Pasadena, California 91125, USA}
\affiliation{Department of Computing and Mathematical Sciences, California Institute of Technology, Pasadena, California 91125, USA}
\author{Tim Colonius}
\affiliation{Department of Mechanical and Civil Engineering, California Institute of Technology, Pasadena, California 91125, USA}
\author{Isabel Scherl}
\affiliation{Department of Mechanical and Civil Engineering, California Institute of Technology, Pasadena, California 91125, USA}
\author{Andrew Stuart}
\affiliation{Department of Computing and Mathematical Sciences, California Institute of Technology, Pasadena, California 91125, USA}

\begin{abstract}
    We consider the problem of filtering dynamical systems, possibly stochastic, using observations of statistics. Thus, the computational task is to estimate a time-evolving density $\rho(v, t)$ given noisy observations of the true density $\rho^\dagger$; this contrasts with the standard filtering problem based on observations of the state $v$. The task is naturally formulated as an infinite-dimensional filtering problem in the space of densities $\rho$. However, for the purposes of tractability, we seek algorithms in state space; specifically, we introduce a mean-field state-space model, and using interacting particle system approximations to this model, we propose an ensemble method. We refer to the resulting methodology as the ensemble Fokker--Planck filter (EnFPF).
    
    Under certain restrictive assumptions, we show that the EnFPF approximates the Kalman--Bucy filter for the Fokker--Planck equation, which is the exact solution to the infinite-dimensional filtering problem. Furthermore, our numerical experiments show that the methodology is useful beyond this restrictive setting. Specifically, the experiments show that the EnFPF is able to correct ensemble statistics, to accelerate convergence to the invariant density for autonomous systems, and to accelerate convergence to time-dependent invariant densities for non-autonomous systems.  We discuss possible applications of the EnFPF to climate ensembles and to turbulence modeling.
\end{abstract}

\maketitle

\begin{quotation}
Data assimilation (DA) is the process of estimating the state of a dynamical system using observations. Here, we modify the standard DA setting to allow for observations of \emph{statistics} of a system with respect to its time-evolving probability density. We propose a mathematical framework, a resulting ensemble method, and present numerical experiments demonstrating accelerated convergence of a system to its attractor. We propose further applications to problems in climate and turbulence modeling.
\end{quotation}

\section{Introduction}

The goal of this paper is to introduce a filtering methodology that incorporates statistical information into a (possibly stochastic) dynamical system. In Secs. \ref{ssec:1.1}--\ref{ssec:1.3}, we present, respectively, a high-level overview of the problem, discuss the motivation and previous literature, and outline the paper structure and our contributions.

\subsection{Assimilating Statistical Observations}
\label{ssec:1.1}

We start by presenting a high-level overview of the problem of incorporating statistical information into a dynamical system; a detailed problem statement follows in Sec. \ref{ssec:problem}.

{Data assimilation (DA)} is overviewed in a number of books, including Refs.\cite{jazwinski_stochastic_1970,kalnay_atmospheric_2002,law_data_2015,reich_probabilistic_2015}. The problem is to estimate the state of a dynamical system by combining noisy, partial observations with a model for the system. In the continuous-time DA problem, we have a stochastic differential equation (SDE)
\begin{align}
    dv^\dagger &= f(v^\dagger, t)\,dt + \sqrt{\Sigma(t)}\,dW,\label{eq:sde}\\
    {v^\dagger(0)} &{= v^\dagger_0,}
\end{align}
with solution $v^\dagger\in\mathbb{R}^d$, and observations given by
\begin{equation}
    dz^\dagger = h(v^\dagger(t), t)\,dt + \sqrt{\Gamma(t)}\,dB,\label{eq:obs_standard}
\end{equation}
with $z^\dagger\in\mathbb{R}^p$. The equations for $v^\dagger$ and $z^\dagger$ are driven by independent standard Wiener processes $W$ and $B$. These SDEs, as with all the SDEs in the paper, are to be interpreted in the It\^o sense. Filtering is then the problem of obtaining the best possible estimate of the posterior density on $v^\dagger(t)$ given the past observations $\{z^\dagger(s)\}_{s\in[0,t]}$. Throughout the paper, we use the $\dagger$ superscript to indicate the true quantities, and omit it for filtered quantities.

Instead of observing a specific trajectory of a dynamical system, as $\{z^\dagger(t)\}$ given by Eq.~\eqref{eq:obs_standard} does, one can also consider \emph{observations of the system's statistical behavior}, that is, observations of functionals of the probability density $\rho^\dagger(v,t)$ over trajectories. This density reflects the randomness from the initial conditions for $v$ and/or from the Brownian forcing. For a deterministic dynamical system ($\Sigma \equiv 0$), if the initial conditions are random, then $\rho^\dagger(v,t)$ will reflect the changing density over time under the action of the system's dynamics, governed by the Liouville equation.\footnote{Here, we use the term Liouville equation for the equation governing evolution of the density of any ordinary differential equation, not just in the Hamiltonian setting.} If noise is present, the changing density is also affected by the Brownian noise $W$ and is governed by the Fokker--Planck equation, a diffusively regularized Liouville equation. In this paper we focus on observations of $\rho^\dagger(v, t)$ defined by replacing Eq.~\eqref{eq:obs_standard} with
\begin{equation}
    dz^\dagger = \left(\int \mathfrak{h}(v, t) \rho^\dagger(v, t) \,dv\right) dt + \sqrt{\Gamma(t)}\,dB.\label{eq:obs_stat}
\end{equation}
Here, $\mathfrak{h}(v, t)$ defines the observed statistics  of $v$, $B$ is a Wiener process, and $z^\dagger\in\mathbb{R}^p$. The filtering problem is to estimate {a density} $\rho(v, t)$ given all the past observations $\{z^\dagger(s)\}_{s\in[0,t]}$. As in the observation equation \eqref{eq:obs_standard}, the observations are finite-dimensional, noisy, and partial. However, since the observations are now of $\rho^\dagger(v, t)$ instead of $v^\dagger(t)$, we must specify the dynamics of $\rho^\dagger(v, t)$. This is given by the Fokker--Planck (FP) or Kolmogorov forward equation,
\begin{subequations}
\begin{align}
    \frac{\partial \rho^\dagger}{\partial t} &= \mathcal{L}^*(t)\rho^\dagger,\\
    \mathcal{L}^*(t)\psi &= -\nabla\cdot(\psi f) + \frac{1}{2} \nabla\cdot\bigl(\nabla\cdot(\psi \Sigma)\bigr),
    \end{align}\label{eq:fp}\end{subequations}
where $\mathcal{L}^*$ is the adjoint of the generator of Eq.~\eqref{eq:sde}.\footnote{We define the divergence of a matrix as is standard in continuum mechanics; see Gurtin (1981)\cite{gurtin_introduction_1981} and Gonzalez and Stuart (2008)\cite{gonzalez_first_2008}. The divergence of a matrix $S$ is defined by the identity $(\nabla\cdot S)\cdot a = \nabla\cdot(S^T a)$ holding for any vector $a$.}. For a deterministic system, with $\Sigma \equiv 0$, the Fokker--Planck equation reduces to the Liouville equation.

An important question is how one would obtain observations of a system's statistics for problems of practical relevance. We discuss this in detail in Sec. \ref{sssec:obtaining_obs}. For now, we proceed on the assumption that $z^\dagger$ solving Eq.~\eqref{eq:obs_stat} is given.

Now, Eqs.~\eqref{eq:fp} and \eqref{eq:obs_stat} define a filtering problem for $\rho(v, t)$. This is an infinite-dimensional filtering problem, in contrast to the finite-dimensional filtering problem for $v(t)$ defined by Eqs.~\eqref{eq:sde} and \eqref{eq:obs_standard}. We refer to the filtering problem defined by Eqs.~\eqref{eq:fp} and \eqref{eq:obs_stat} as the \emph{Fokker--Planck filtering problem}. Note that both Eqs.~\eqref{eq:fp} and \eqref{eq:obs_stat} are \emph{linear} in $\rho^\dagger$, meaning that the solution to the problem can be written using the infinite-dimensional Kalman--Bucy (KB) filter; see Sec. \ref{ssec:kb} for more details. 

{Despite the existence of an exact solution to the filtering problem,
through the infinite-dimensional Kalman--Bucy (KB) filter, approximating the Gaussian conditional density $\rho$ is in most setting computationally intractable since the mean is a probability density function and the covariance is an operator. Thus we} seek inspiration from the success of ensemble Kalman filtering \cite{evensen_sequential_1994}: we work in state space and seek an ensemble that evolves in time a number of states whose empirical density approximates the filtered $\rho$. We note that the particle filter similarly substitutes the problem of evolving a probability density with that of evolving a number of particles and weights \citep{crisan_interacting_1999}. Furthermore, derivation of ensemble Kalman methods via a mean-field limit provides a systematic methodology for the derivation of equal-weight approximate filters \citep{calvello_ensemble_2022}. We call the resulting method the \emph{ensemble Fokker--Planck filter} (EnFPF).

\begin{figure}
    \centering
    \includegraphics[scale=0.45]{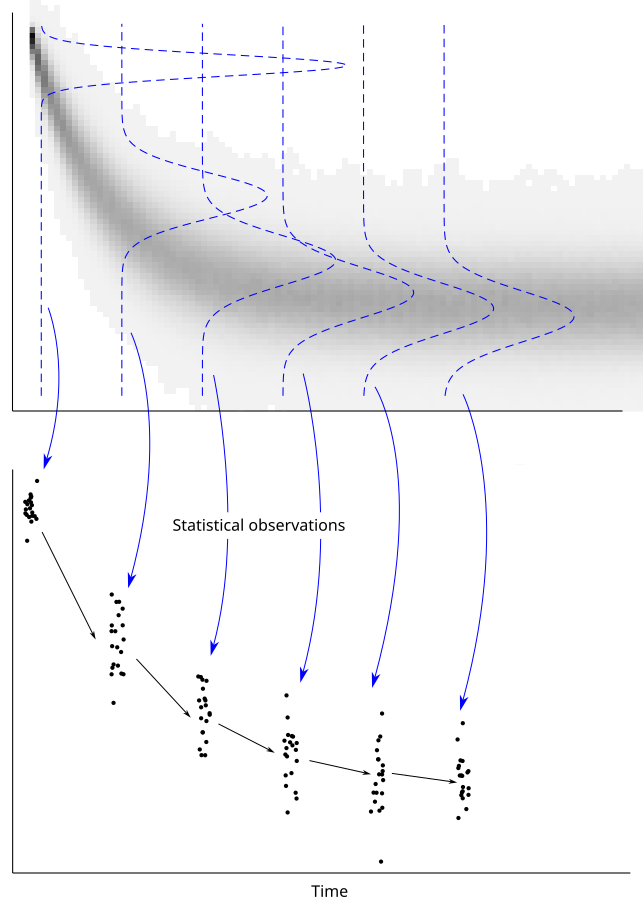}
    \caption{The density of an Ornstein--Uhlenbeck process evolving in time (top panel). At regular intervals, we make observations of this density and use them to inform the evolution of an ensemble (bottom panel).}
    \label{fig:schematic}
\end{figure}

Figure~\ref{fig:schematic} shows a schematic of such an ensemble method. In the top panel is the true time-varying probability density, in this case of an Ornstein--Uhlenbeck process. In the bottom panel is an ensemble of states. At regular intervals, we observe expectations over the density in the top panel. {Using these observations and our model of the system, we evolve the ensemble over the time interval between the current and next observations.}

\subsection{Motivation and Literature Review}\label{ssec:motivation}

{The subject of Kalman filtering and Kalman--Bucy (KB) filtering in infinite-dimensional spaces is studied in the control theory literature \citep{curtain_infinite_1978}. 
We emphasize that although we sketch out the basic mathematical foundations of the Fokker--Planck filtering problem in Sec. \ref{sec:justify}, many interesting mathematical problems in analysis and probability remain open in this area.
To the best of our knowledge, the methodology proposed here is the first general method for assimilating observations of statistics directly into a state-space formulation of dynamical systems. Our methodology is built on the conceptual approach introduced in the feedback particle filter \citep{yang_feedback_2013,reich_data_2019}, and earlier related work \citep{crisan_approximate_2010}, seeking a mean-field model that achieves the goal of filtering and can be approximated by particle methods \citep{pathiraja_mckean--vlasov_2021}; in particular, we seek particle approximations of the mean-field model inspired by ensemble Kalman methods \citep{calvello_ensemble_2022}.

The problem of recovering a probability density from a finite number of known moments is called a moment problem. When $\mathfrak{h}$ in Eq.~\eqref{eq:obs_stat} consists of monomials in $v$, the problem of reconstructing $\rho$ is similar to a moment problem, with the major difference that $\rho$ evolves in time according to a dynamical system. Moment problems are typically regularized by a maximum entropy approach \citep{abramov_multidimensional_2010}; in the Fokker--Planck filtering problem, regularization is provided by the system's dynamics.

Our motivation comes from a number of applications around which we organize the remainder of our literature review, after first discussing the general question of how to obtain observations of statistics.}

{
\subsubsection{Obtaining observations of statistics}\label{sssec:obtaining_obs}

In typical applications, one can only observe a single trajectory of a dynamical system, and thus the statistics of the density will not be directly available. If we are interested in the statistics of the invariant measure, as we are for several of the applications discussed below, then for ergodic systems we have that
\begin{equation}
    \lim_{T\to\infty}\frac{1}{T}\int_0^T \mathfrak{h}(v^\dagger(t))\,dt = \int \mathfrak{h}(v)\rho^\dagger(v)\,dv,
\end{equation}
where $\rho^\dagger$ is the invariant density, and thus an approximation of the statistics of the invariant measure can be obtained from a long observed or simulated trajectory of the dynamical system.

For nonautonomous systems, due to lack of ergodicity, observations of the statistics cannot be made using long time averages. If the nonstationary forcing is slow enough, however, an adiabatic approximation, in which the fast scales are considered to be ergodic with an invariant measure parameterized by the value of the slow forcing, may be justified \citep{pavliotis_multiscale_2008,drotos_quantifying_2016}. If the forcing is periodic, then observations of the phase-dependent statistics could be obtained by averaging the observables at a given phase over multiple periods.

For certain systems, invariant statistics may be acquired analytically or by numerically solving a different set of equations. For example, for the Navier--Stokes equations, the Reynolds-averaged Navier--Stokes (RANS) equations can be used to approximate the stationary statistics.

It may be possible to instead formulate a filtering problem using an observation operator that involves averaging over a finite time window; we leave this for future work. This problem was considered in Ref. \cite{dirren_toward_2005}, but only a heuristic solution was proposed. We note that other works have made use of observation operators with time-delayed observations \cite{rey_using_2014,bach_ensemble_2021}, albeit for different purposes.}

In Secs. \ref{ssec:IB2}--\ref{ssec:IB5}, we review the possible applications of the ensemble Fokker--Planck filter.

\subsubsection{Acceleration of convergence to a (possibly time-dependent) invariant measure}\label{ssec:IB2}
Acceleration of the time to convergence of dynamical models to their invariant measure (often referred to as the ``spin-up'' period or the transient) is of importance in many fields, including climate \citep{bryan_accelerating_1984,daron_quantifying_2015,drotos_importance_2017,deser_insights_2020} and other fluid problems \citep{wang_deepparticle_2022}, Langevin sampling \citep{hwang_accelerating_2005,abdulle_accelerated_2019}, and turbulence simulation \citep{nelson_reducing_2017}.

For a stochastic differential equation with an invariant measure, under conditions described in Goldys and Maslowski (2005)\cite{goldys_exponential_2005}, the convergence to this invariant measure is exponential with an exponent related to the spectral gap of the corresponding generator.

In this paper, we show that this convergence can be accelerated using the ensemble Fokker--Planck filter, and this is the primary application we test in the numerical experiments. In particular, if some statistics of the invariant measure are known, these statistics can be assimilated into the ensemble, obtaining an ensemble whose empirical density is closer to the invariant measure.

To our knowledge, existing methods of accelerating convergence {of model trajectories} to the invariant measure have been problem-dependent, as in Bryan (1984)\cite{bryan_accelerating_1984}. Isik (2013)\cite{isik_spin_2013} and Isik, Takhirov, and Zheng (2017)\cite{isik_second_2017} studied a relaxation-based method of accelerating the convergence to equilibrium of the Navier--Stokes equations, which bears some resemblance to our approach.

Non-autonomous (also referred to as non-stationary) and random dynamical systems can have time-dependent attractors, known as pullback attractors, to which the evolution converges \cite{arnold_random_1998}. A pullback attractor is the set that the dynamical system approaches when evolved in time from the infinite past to a fixed time (say time $0$ without loss of generality). We refer to the probability measure associated with these attractors as time-dependent invariant measures, following Chekroun, Simonnet, and Ghil (2011)\cite{chekroun_stochastic_2011}. These objects are of considerable interest for climate \citep{ghil_climate_2008,chekroun_stochastic_2011,daron_quantifying_2015}. The EnFPF can also accelerate convergence to these invariant measures.

The problem of accelerating convergence to the invariant measure is related to the problem of {controlling the Fokker--Planck equation, where a density is controlled in order to reach to a specified target distribution\cite{annunziato_fokkerplanck_2013}, and to} statistical control, wherein one aims to return a perturbed system to its equilibrium statistics\cite{covington_effective_2023}.

Furthermore, the EnFPF could be tested for accelerating the convergence of sampling algorithms such as Langevin sampling and Markov chain Monte Carlo, when some statistics of the target density are known \emph{a priori}.

Finally we note that when estimating Koopman or Perron--Frobenius operators, it is often necessary to have a large number of trajectories from initial conditions sampled from the invariant measure.

\subsubsection{Parameter estimation}
The EnFPF could be used for jointly updating states and parameters using statistical observations by adopting a state augmentation approach.
Other work has adapted methods from data assimilation for parameter estimation using time-averaged statistics, assumed to be close to the statistics on the invariant measure by ergodicity \citep{kohl_adjoint_2002,schneider_learning_2021,mons_ensemble-variational_2021,schneider_ensemble_2022}.

\subsubsection{Correcting for model error} Generally, methods that correct for model error are formulated in terms of forecast performance at some lead time, e.g., \cite{levine_framework_2022-1} and \cite{bach_multi-model_2023}. If one is instead interested in correcting statistical properties, one can postulate a parametric form for the model error and use time-averaged observations to estimate the parameters, as discussed in the preceding paragraph. Alternatively, the EnFPF could be tested for directly correcting model error using statistical observations in a similar manner to the use of classical DA in reducing the impact of model error for forecast applications \cite{danforth_using_2008,chen_correcting_2022}. The analysis increments could then be taken to approximate model error corrections, and training a machine learning model to predict these corrections could be tested, as has been done for classical DA \citep{danforth_using_2008,farchi_using_2021,chen_correcting_2022}.

Statistical properties have previously been used to learn closure models for the Navier--Stokes equation using a 3DVar-like scheme \cite{ephrati_data-assimilation_2023}.

\subsubsection{Assimilation of time-averaged observations}\label{ssec:IB5}
In paleoclimate, proxy records often represent time averages instead of instantaneous measurements. Methods have been developed for making use of time-averaged observations for state estimation in the paleoclimate data assimilation literature \citep{dirren_toward_2005}. As discussed above, in the case of slow forcing, time-averaged observations can be used to approximately track the system's time-varying statistics, enabling their use in the EnFPF.

\subsection{Contributions and Paper Outline}
\label{ssec:1.3}

The primary contributions of this work are: (i) to establish a framework for the filtering of stochastic dynamical systems, or dynamical systems with random initial data, given only observations of statistics; (ii) to introduce ensemble-based state-space methods for this filtering problem via a mean field perspective; and (iii) to demonstrate numerically that the proposed methods are effective at guiding dynamical systems toward observed statistics. (i) is covered in Sec. \ref{ssec:problem} and Sec. \ref{sec:justify}; (ii) is covered in sections \ref{ssec:mean-field_intro}--\ref{ssec:implementation}; and (iii) is covered in Sec. \ref{sec:experiments}.

In Sec. \ref{ssec:problem}, we outline the Fokker--Planck filtering problem and distinguish it from the standard filtering problem. In Secs. \ref{ssec:mean-field_intro}--\ref{ssec:discrete-time}, we introduce a mean-field algorithm and its particle and discrete-time approximations, culminating in the ensemble Fokker--Planck filter (EnFPF). In Sec. \ref{ssec:implementation}, we discuss implementation details, including the approximation of the score function and a square-root ensemble formulation with reduced computational effort. 

In Sec. \ref{sec:experiments}, we carry out numerical experiments with several chaotic dynamical systems, both autonomous and non-autonomous, and based on the Lorenz63, Lorenz96, and Kuramoto--Sivashinsky models. In particular, we demonstrate that the EnFPF can accelerate the convergence of these systems to their invariant densities, using information about the moments of these densities.

In Sec. \ref{sec:justify}, we provide a justification of our algorithm. We first formulate the KB filter for densities (Sec. \ref{ssec:kb}), which provides a solution to the Fokker--Planck filtering problem in function space, and analyze some of its properties in Appendix~\ref{sec:kb_properties}. We then propose an ansatz amenable to a mean-field model (Sec. \ref{ssec:ansatz}) and show its equivalence to the KB filter for densities under some assumptions (Theorem \ref{theorem:equiv} in Appendix \ref{appendix:proof}). We then show how this ansatz can be approximated by a mean-field model (Sec. \ref{sec:mean_field_intro}, providing further details in Appendix~\ref{ssec:mean-field}).

Finally, in Sec. \ref{ssec:conclusions}, we give conclusions and outlook for future work.

\section{Problem and Algorithm}\label{sec:algorithm}

In Sec. \ref{ssec:problem}, we introduce the probabilistic formulation of the standard filtering problem, and then contrast it with the Fokker--Planck filtering problem,  
where data is in the form of statistics. Sec. \ref{ssec:mean-field_intro} demonstrates an approach to this problem using a mean-field model. In Sec. \ref{ssec:particle_approx}, we introduce a particle approximation of the mean-field algorithm, which forms the basis of the proposed EnFPF.

\subsection{Problem Statement}\label{ssec:problem}

\subsubsection{The Standard Filtering Problem}

In the standard filtering problem, we are given state observations $z^\dagger(t)$ of $v^\dagger(t)$, defined by Eq.~\eqref{eq:obs_standard}, and the dynamics of $v^\dagger(t)$ are given by Eq.~\eqref{eq:sde}. The problem is then to find an equation for the conditional distribution of $v | Z^\dagger(t)$, where $Z^\dagger(t) = \{z^\dagger(s)\}_{s\in[0,t]}$ are the observations accumulated up to time $t$ under a fixed realization of $B$. The solution to the filtering problem is given by the Kushner--Stratonovich equation,
\begin{equation}
    \frac{\partial \rho}{\partial t} = \mathcal{L}^*{(t)}\rho + \Bigl\langle h(v,{t}) - \mathbb{E}h, \frac{dz^\dagger}{dt} - \mathbb{E}h \Bigr\rangle_{\Gamma{(t)}}\rho,\label{eq:ks}
\end{equation}
where $\langle\cdot,\cdot\rangle_A\equiv \langle A^{-1/2}\cdot,A^{-1/2}\cdot\rangle$ is the weighted Euclidean inner product. Treatments of the standard filtering problem can be found in Jazwinski (1970)\cite{jazwinski_stochastic_1970} and Bain and Crisan (2009)\cite{bain_fundamentals_2009}.

\subsubsection{The Fokker--Planck Filtering Problem}

In this paper, we consider instead noisy observations of $\rho^\dagger(v, t)$: the observation process $z^\dagger(\cdot)$ is given by
\begin{equation}
    dz^\dagger = H(t)\rho^\dagger(\cdot,t)\,dt + \sqrt{\Gamma(t)}\; dB.\label{eq:fp_obs}
\end{equation}
Here, $H(t)$ is a linear operator mapping the space of probability densities into a finite-dimensional Euclidean space, and the dynamics of $\rho^\dagger$ are given by the Fokker--Planck equation \eqref{eq:fp}. That is, we make observations of \emph{statistics} of the dynamical system. We refer to the problem of finding the conditional density of $v | Z^\dagger(t)$, where $Z^\dagger(t) = \{z^\dagger(s)\}_{  s\in[0,t]}$ is given by Eq.~\eqref{eq:fp_obs}, as the \emph{Fokker--Planck filtering problem}. In the following subsection, we propose an approximation to the solution to this problem in state space.

\subsection{Mean-Field Equation}\label{ssec:mean-field_intro}

Although in Sec. \ref{ssec:kb} we treat the Fokker--Planck filtering problem for more general $H$, in the rest of what follows {we focus on the setting where}
\begin{equation}
    H(t)\rho = \mathbb{E}[\mathfrak{h}(v, t)] = \int \mathfrak{h}(v, t) \rho(v, t) \,dv,\label{eq:H_def}
\end{equation}
for some $\mathfrak{h}$. With this assumption on $H$, Eq.~\eqref{eq:fp_obs} reduces to Eq.~\eqref{eq:obs_stat}. In particular, if $\mathfrak{h}$ is a monomial in $v$, e.g., $\mathfrak{h}(v) = v$ or $\mathfrak{h}(v) = \operatorname{vec}(v\otimes v)$, then $H\rho$ will correspond to moments of $\rho$. We will henceforth use $\mathbb{E}$ to denote expectation under $\rho$, unless otherwise indicated.

\begin{remark}Note that if $\rho^\dagger(v, 0) = \delta(v - v^\dagger_0)$ for some $v^\dagger_0$ and $\Sigma = 0$, then the Fokker--Planck filtering problem is equivalent to the standard filtering problem with $v^\dagger(0) = v^\dagger_0$, observation operator $\mathfrak{h}$, and $\Sigma = 0$.\end{remark}

Our proposed methodology is to introduce a mean-field model for variable $v$, depending on its own probability density function $\rho(v,t)$. The mean-field model is chosen to drive the system toward the observed statistical information. Algorithms are then based on particle approximation of this model, leading to ensemble Kalman--type methods. The mean-field model considered is
\begin{subequations}
\label{eq:mean_field}
\begin{align}
    dv &= f(v, t) \, dt + \sqrt{\Sigma(t)}\, dW + K(t)\bigl(dz^\dagger - d\hat{z}\bigr),\\
    d\hat{z} &= (\mathbb{E}\mathfrak{h})(t)\, dt + \sqrt{\Gamma(t)} dB,\\
    K(t) &= C^{v\mathfrak{h}}(t)\Gamma(t)^{-1},\\
    C^{v\mathfrak{h}}(t) &= \mathbb{E}\bigl[\bigl(v(t) - \mathbb{E}v(t)\bigr)\bigl(\mathfrak{h}(v,t) - (\mathbb{E}\mathfrak{h})(t)\bigr)^T\bigr].
\end{align}
\end{subequations}
The terms in the mean-field model can be understood intuitively as follows. The first two terms on the right-hand side of Eq.~(\ref{eq:mean_field}a) are simply the dynamics of the system \eqref{eq:sde}. The third term resembles the standard nudging observer term from control theory, with an ensemble Kalman--inspired gain, and the use of noisy simulated data, as in the stochastic ensemble Kalman filter.

In some problems, we find that it is beneficial to include an additional score-based term in the model, replacing Eq.~(\ref{eq:mean_field}a) by 
\begin{align}
\label{eq:mean_field_K2}
    dv &= f(v, {t}) \, dt + \sqrt{\Sigma{(t)}}\, dW + K{(t)}\Bigl(dz^\dagger - d\hat{z}\Bigr)\nonumber\\
    &\quad\quad\quad\quad\quad\quad\quad\quad+ K{(t)}\Gamma{(t)} K{(t)}^T \nabla \log \rho{(v, t)}\,dt.
\end{align}
The additional term induces negative diffusion in the equation for the density of $v$, exactly balancing the diffusion introduced through $z^\dagger$ and $\hat{z}$. We justify equations \eqref{eq:mean_field} and \eqref{eq:mean_field_K2} in detail in Sec. \ref{sec:justify} by building on the Fokker--Planck picture in density space.

\subsection{Particle Approximation of Mean-Field Equation}\label{ssec:particle_approx}

In order to tractably implement the mean-field equations \eqref{eq:mean_field}, we use a particle (or ensemble) approximation. That is, given $J$ particles, we consider the following interacting particle system for 
$\{v^{(j)}\}_{j=1}^{\mathsf{J}}$:
\begin{subequations}
\label{eq:firstsub}
\begin{align}
    dv^{(j)} &= f(v^{(j)}, t)\,dt + \sqrt{\Sigma(t)}\, dW^{(j)} + K(t)\bigl(dz^\dagger - d\hat{z}^{(j)}\bigr),\\
    d\hat{z}^{(j)} &= (\mathbb{E}^\mathsf{J}\mathfrak{h})(t)\,dt + \sqrt{\Gamma(t)}\,dB^{(j)},\label{eq:pred_obs}\\
    K(t) &= (C^{v\mathfrak{h}}(t))^{\mathsf{J}}\,\Gamma(t)^{-1}.
\end{align}
\end{subequations}
Here, $\mathbb{E}^\mathsf{J}$ denotes expectation with respect to the empirical measure formed by equally weighting Dirac measures at the particles $\{v^{(j)}\}_{j=1}^{\mathsf{J}}$; $(C^{v\mathfrak{h}})^{\mathsf{J}}$ denotes the sample cross-covariance computed using this empirical measure,
\begin{equation*}
C^{v\mathfrak{h}}(t) = \mathbb{E}^\mathsf{J}\bigl[\bigl(v(t) - \mathbb{E}v(t)\bigr)\bigl(\mathfrak{h}(v,t) - (\mathbb{E}\mathfrak{h})(t)\bigr)^T\bigr].
\end{equation*}

Note that unlike the ensemble Kalman filter, the predicted observation for each ensemble member, Eq.~\eqref{eq:pred_obs}, involves the expectation of $\mathfrak{h}$ over the ensemble, instead of the observation operator applied to that ensemble member.

\subsection{Discrete-Time Approximation of Mean-Field Equation}\label{ssec:discrete-time}

A discrete-time analog of Eqs.~\eqref{eq:firstsub} is given by
\begin{subequations}
\label{eq:discrete}
\begin{align}
    \hat{v}^{(j)}_{i+1} &= \Psi_i(v_i^{(j)}) + \xi^{(j)}_i,\\
    v^{(j)}_{i+1} &= \hat{v}^{(j)}_{i+1} + K_{i+1}(y_{i+1}^\dagger - \hat{y}^{(j)}_{i+1}),\\
    \hat{y}^{(j)}_{i+1} &= \mathbb{E}^{\mathsf{J}}[\mathfrak{h}_{i+1}(\hat{v}_{i+1})] + \eta^{(j)}_{{i+1}},\\
    K_{i+1} &= (\hat{C}_{i+1}^{v\mathfrak{h}})^\mathsf{J}((\hat{C}_{i+1}^{\mathfrak{h}\mathfrak{h}})^\mathsf{J} + (\Gamma_d)_{i+1})^{-1},
\end{align}
\end{subequations}
where $\xi^{(j)}_i \sim \mathcal{N}(0, (\Sigma_d)_i)$, $\eta^{(j)}_i \sim \mathcal{N}(0, (\Gamma_d)_i)$, $\mathfrak{h}_i(v) = \mathfrak{h}(v, t)$, and
\begin{align}
    (\hat{C}_{i+1}^{v\mathfrak{h}})^\mathsf{J} &= \mathbb{E}^{\mathsf{J}} [(\hat{v}_{i+1} - \mathbb{E}^{\mathsf{J}}\hat{v}_{i+1})\\
    &\quad\quad\quad\otimes (\mathfrak{h}_{i+1}(\hat{v}_{i+1}) - \mathbb{E}^{\mathsf{J}}[\mathfrak{h}_{i+1}(\hat{v}_{i+1})])],\nonumber\\
    (\hat{C}_{i+1}^{\mathfrak{h}\mathfrak{h}})^\mathsf{J} &= \mathbb{E}^{\mathsf{J}} [(\mathfrak{h}_{i+1}(\hat{v}_{i+1}) - \mathbb{E}^{\mathsf{J}}[\mathfrak{h}_{i+1}(\hat{v}_{i+1})])\\
    &\quad\quad\quad\otimes (\mathfrak{h}_{i+1}(\hat{v}_{i+1}) - \mathbb{E}^{\mathsf{J}}[\mathfrak{h}_{i+1}(\hat{v}_{i+1})])]\nonumber.
\end{align}
Furthermore, we introduce the following 
rescalings adopted in Law, Stuart, and Zygalakis (2015)\cite{law_data_2015}:
\begin{equation}
\begin{aligned}
    f(\cdot, t) &= (\Psi{_i}(\cdot) - I\cdot)/\tau,\quad z_{i+1}^\dagger - z_i^\dagger = \tau y_{i+1}^\dagger,\\
    \Sigma(i\tau) &= (\Sigma_d)_i/\tau\quad\Gamma(i\tau) = \tau(\Gamma_d)_i,\quad i = t/\tau,
\end{aligned}
\end{equation}
Then, Eqs.~\eqref{eq:discrete} can be seen to be a discretization of Eqs.~\eqref{eq:firstsub} with time step $\tau$. More justification is given for these rescalings in Salgado, Middleton, and Goodwin (1988)\cite{salgado_connection_1988} and Simon (2006)\cite{simon_optimal_2006}. Note that both $K{_{i+1}} = (\hat{C}_{i+1}^{v\mathfrak{h}})^{\mathsf{J}}\,{(\Gamma_d)_{i+1}^{-1}}$ and Eq.~(\ref{eq:discrete}d) are consistent with the continuous-time gain as $\tau\to 0$. We use the latter, similar to the discrete-time Kalman filter.

\subsection{Score Function Term}\label{ssec:score}

We now discuss further computational issues that arise when Eq.~(\ref{eq:mean_field}a) is replaced by \eqref{eq:mean_field_K2}. This term involves the score function, defined as $\nabla\log\rho$, but with an additional preconditioning. If this term is added to the discrete-time particle version of the filter, Eq.~(\ref{eq:discrete}b) becomes
\begin{align}
    v^{(j)}_{i+1} &= \hat{v}^{(j)}_{i+1} + K_{i+1}(y_{i+1}^\dagger - \hat{y}^{(j)}_{i+1})\\
     &\quad\quad\quad\quad+K_{i+1}(\Gamma_d)_{i+1} K_{i+1}^T (\nabla\log\rho_{i+1})^{\mathsf{J}}(\hat{v}^{(j)}_{i+1}),\nonumber
\end{align}
where $(\nabla\log\rho_{i+1})^{\mathsf{J}}$ denotes particle-based approximation of the score function using $\{ \hat{v}^{(j)}_{i+1}\}_{j=1}^{\mathsf{J}}$. If we make the assumption that the density is Gaussian with mean $\mathbb{E}v$ and covariance $C^{vv}$, the score function takes on a simple form,
\begin{equation}
    \nabla \log\rho = -(C^{vv})^{-1}(v - \mathbb{E}v).\label{eq:score_gaussian}
\end{equation}
A natural particle approximation $(\nabla \log\rho)^\mathsf{J}$ follows by replacing the mean and covariance with the corresponding quantities computed under the empirical measure of the set of particles.

More general kernel-based nonparametric estimators for the score function have been developed, such as those defined in Zhou, Shi, and Zhu (2020)\cite{zhou_nonparametric_2020} and implemented in the {\tt kscore} package. In the numerical experiments reported in this paper, we either omit the score term completely, or use it and employ only the Gaussian approximation.

\subsection{Implementation}\label{ssec:implementation}

\subsubsection{Ensemble Square-Root Formulation}

In order to make the method scale well to high dimensions, an ensemble square-root formulation \citep{tippett_ensemble_2003} of Eq.~\eqref{eq:discrete} can be used, although we do not use it in the numerical experiments reported here. The advantage of this formulation is that the most expensive linear algebra operations are rewritten in the ensemble space, resulting in favorable computational complexity when $J$ is much smaller than the state-space dimension $d$ or observation-space dimension $p$.\footnote{Note, however, that in many applications with a high-dimensional state space, the statistics of interest may be relatively low-dimensional, such that the regular version of the algorithm \eqref{eq:discrete} will be feasible.}

To implement this method, we write $(C^{vv})^{\mathsf{J}} = VV^T$, $(C^{v\mathfrak{h}})^{\mathsf{J}} = VY^T$, and $(C^{\mathfrak{h}\mathfrak{h}})^{\mathsf{J}} = YY^T$, where the $j$th column of $V$ and the $j$th column of $Y$ are given by
\begin{equation}
\begin{aligned}
V^{(j)} &= (v^{(j)} - \mathbb{E}^{\mathsf{J}}v)/\sqrt{J - 1},\\
Y^{(j)} &= (\mathfrak{h}(v^{(j)}) - \mathbb{E}^{\mathsf{J}}\mathfrak{h})/\sqrt{J - 1},
\end{aligned}
\end{equation}
respectively. {Then, $K$ can be written as
\begin{equation}
    K = VY^TW,
\end{equation}
where $W = (\Gamma_d^{-1} - \Gamma_d^{-1}Y(I + Y^T\Gamma_d^{-1}Y)^{-1}Y^T\Gamma_d^{-1})$ by the Woodbury identity.}

We assume that $\Gamma_d^{-1}$ is provided and can be applied cheaply, for example, if it is diagonal. This is a standard assumption \citep{tippett_ensemble_2003}. With this expression, $K$ can be computed in $\mathcal{O}(J^3 + J^2p + Jp^2 + dJp)$.

Note that the Gaussian score function approximation Eq.~\eqref{eq:score_gaussian} cannot be applied in cases when $J < d$ since $(C^{vv})^\mathsf{J}$ will be singular. We do not consider the score function term in the complexity analysis.

The complexity is, thus, a quadratic polynomial in $d$ and $p$, whereas various ensemble square-root filters can be implemented to be linear in $p$ and $d$. The latter rely on the fact that the in the standard Kalman filter the updated covariance can be written as $(I - KH)C^{vv}$, where $H$ is the observation operator. The EnFPF cannot be written in this way. Whether the EnFPF can be reformulated to be linear in $p$ and $d$ by another approach is a topic for future research.

\subsubsection{Code}\label{sssec:code}

The open-source Julia code for the EnFPF is available at \url{https://github.com/eviatarbach/EnFPF}. In the numerical experiments that follow, we compute the Wasserstein distance (explained in Sec. \ref{sec:experiments}) using the Python Optimal Transport library \citep{flamary_pot_2021}. We used the parasweep library to facilitate parallel experiments \citep{bach_parasweep_2021}.

\subsubsection{Numerical Methods for the Test Models}\label{ssec:numerical_methods}

In Sec. \ref{sec:experiments}, we will present numerical experiments with the Lorenz63, Lorenz96, and Kuramoto--Sivashinsky models. We integrate the Lorenz63 and Lorenz96 models using the fourth-order Runge--Kutta method, with a time step of 0.05 for both. We integrate the Kuramoto--Sivashinsky equation in Fourier space using the exponential time differencing fourth-order Runge--Kutta method \citep{kassam_fourth-order_2005} with 64 Fourier modes and a time step of 0.25.

\section{Numerical Experiments}\label{sec:experiments}

In this section, we present the results of numerical experiments applying the discrete-time EnFPF of Sec. \ref{ssec:discrete-time} to the Lorenz63, Lorenz96, and Kuramoto--Sivashinsky systems. The first three subsections are devoted, respectively, to these three models; a final fourth subsection returns to the Lorenz63 model, with quasiperiodic forcing.

We found in the experiments that assimilating too often can cause degraded results for some systems, as opposed to the situation in standard filtering, where increased assimilation frequency is typically preferred. {In the standard filtering problem, there is a single true trajectory, and under certain conditions, the filtering distribution will converge to this trajectory in the limit of zero observational noise\cite{dashti_map_2013,law_data_2015}.} {In the non-zero noise case, however, the filtered time-series (e.g., the maximum \textit{a posteriori} estimate) will not even generally be a trajectory of the dynamical system}, except in methods such as strong-constraint 4DVar. Here, we expect that the problem of ensemble members deviating from being trajectories can be amplified, since the method only aims to match statistical features of the entire ensemble. Thus, if assimilation is done too frequently, then ensemble members may be pushed too far from being trajectories into unphysical or unstable parts of the phase space. In fact, we found the assimilation frequency to be a key tuning parameter. We refer to a single forecast--assimilation step (Eqs.~\eqref{eq:discrete}) as a \emph{cycle}, as is common in the DA literature, and each cycle lasts for $\tau$ time units.

We found, furthermore, that the score term did not consistently improve filtering performance. In the experiments that follow, we omit the score term except in the experiments with the Kuramoto--Sivashinsky system in Sec. \ref{ssec:ks_exp}, where it leads to clear improvements when used, together with the Gaussian approximation, in the form of Eq.~\eqref{eq:score_gaussian}. For both the Lorenz models we found that the inclusion of the Gaussian approximation of the score degraded performance and the use of the kernel-based score approximations, based on the paper of Zhou, Shi, and Zhu (2020)\cite{zhou_nonparametric_2020}, was no better than simply omitting the term altogether.

In the experiments below, we use a {Wasserstein metric to quantify the distance between the ensemble distribution and the invariant density. We estimate the invariant density using an ensemble integrated for a sufficiently long time. We employ the $W_1$ Wasserstein metric that allows us to compute distances between empirical distributions.} 
The code for computing this distance is readily available (see Sec. \ref{sssec:code}).

\subsection{Lorenz63 Model}

For the experiments in this subsection, we use the Lorenz (1963)\cite{lorenz_deterministic_1963} model
\begin{equation}
    \begin{aligned}
        \frac{dx}{dt} &= \sigma(y - x),\\
        \frac{dy}{dt} &= x(r - z) - y,\\
        \frac{dz}{dt} &= xy - \beta z,
    \end{aligned}\label{eq:lorenz63}
\end{equation}
with the standard parameter values $\sigma = 10$, $r = 28$, and $\beta = 8/3$.

\subsubsection{Assimilating Time-Varying Means and Second Moments}\label{sssec:lorenz63_means_vars}

We first verify the ability of the EnFPF to force an ensemble to adopt time-varying statistics. We do this by applying the EnFPF to a 10-member ensemble, with noisy statistical observations of the means and uncentered second moments of the three variables coming from a 100-member ensemble being evolved concurrently. The difference between the statistics computed over the 10- and 100-member ensembles arise due to both sampling errors and different initial conditions. The 100-member ensemble (despite having its own sampling error) better approximates the true statistics of the system, and we view these 100-member ensemble statistics as the truth, based on which we may compute errors in the statistics of 10-member ensembles. We assimilate observations every 0.2 time units, with an observation error covariance set to 20\% of the time variability of each statistic computed over the 100-member ensemble.

\begin{figure}
    \centering
    \includegraphics[scale=0.3]{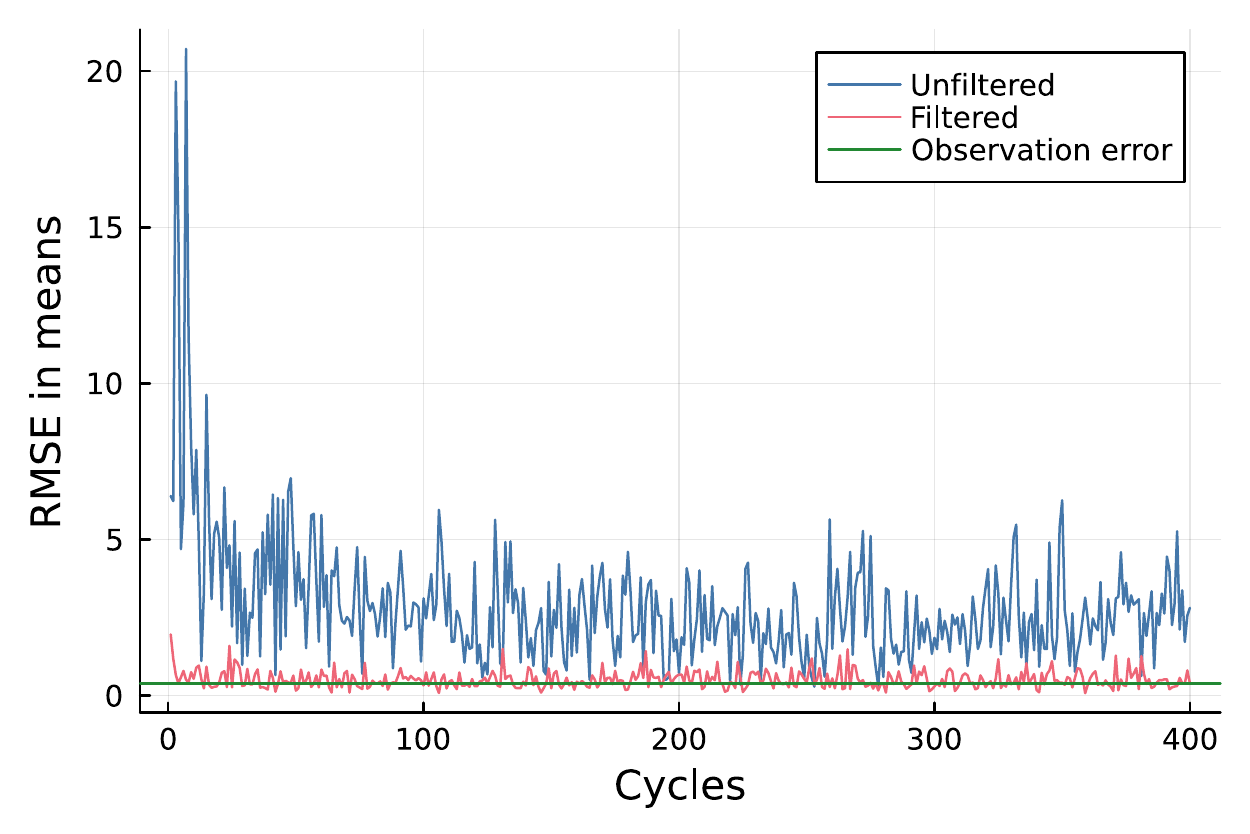}
    \includegraphics[scale=0.3]{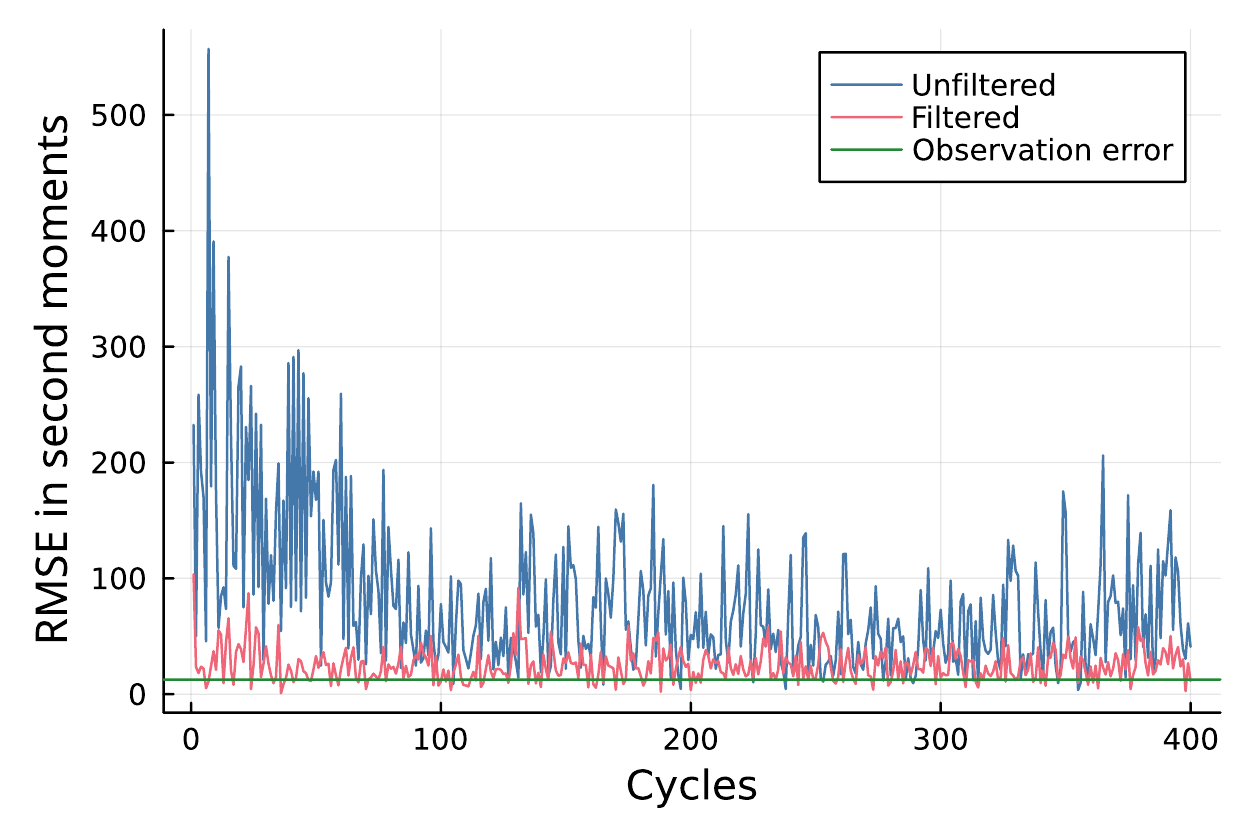}
    \caption{The impact of filtering on the root-mean-square error (RMSE) in the mean and second moment in the Lorenz63 model.}
    \label{fig:lorenz63_means_vars}
\end{figure}

Figure~\ref{fig:lorenz63_means_vars} shows the resulting error in the means and second moments of the 10-member ensemble, compared with the errors arising from an unfiltered run of the 10-member ensemble; in both cases the errors are computed by comparison with the 100-member ensemble. After several cycles, the filter appears to reach an asymptotic error on the order of the observation error, and this error is significantly lower than that arising in the unfiltered case.

\begin{table}
\begin{tabular}{ll}
 \textbf{Means}\\
 \textit{Observation error} & \textit{Filtered RMSE} \\
 \hline
 10\% (0.088) & 0.11\\
 35\% (0.31) & 0.40\\
 60\% (0.53) & 0.69\\
 85\% (0.75) & 0.97\\
\\
 \textbf{Second moments}\\
 \textit{Observation error} & \textit{Filtered RMSE} \\
 \hline
 10\% (2.8) & 20\\
 35\% (9.9) & 23\\
 60\% (17) & 29\\
 85\% (24) & 35\\
\end{tabular}
\caption{The impact of the observation error covariance on filtering performance. In the first column are the percentages of the standard deviation of the time variability of each statistic taken to be the observation error, and in parentheses are the square root of the total variance of the observation error in the statistic. With no filtering, the RMSE is 2.5 in the unfiltered means and 73 in the second moments. The RMSE is averaged over 1400 cycles after 100 transient cycles.}
\label{table:obs_error}
\end{table}

Table~\ref{table:obs_error} shows the impact of the observation error covariance magnitude on the filtering performance. The `setup is otherwise the same as that described above. As expected, the error increases as $\Gamma$ is increased, although still outperforming the unfiltered ensemble.

\subsubsection{Accelerating Convergence to the Invariant Density}

We now test the ability of the EnFPF to accelerate convergence to the invariant density. We assimilate observations of fixed statistics of the invariant density, the means and second moments of the three variables, into a 100-member ensemble. We use the same assimilation frequency and observation error as in Sec. \ref{sssec:lorenz63_means_vars}.

\begin{figure}
    \centering
    \includegraphics[scale=0.3]{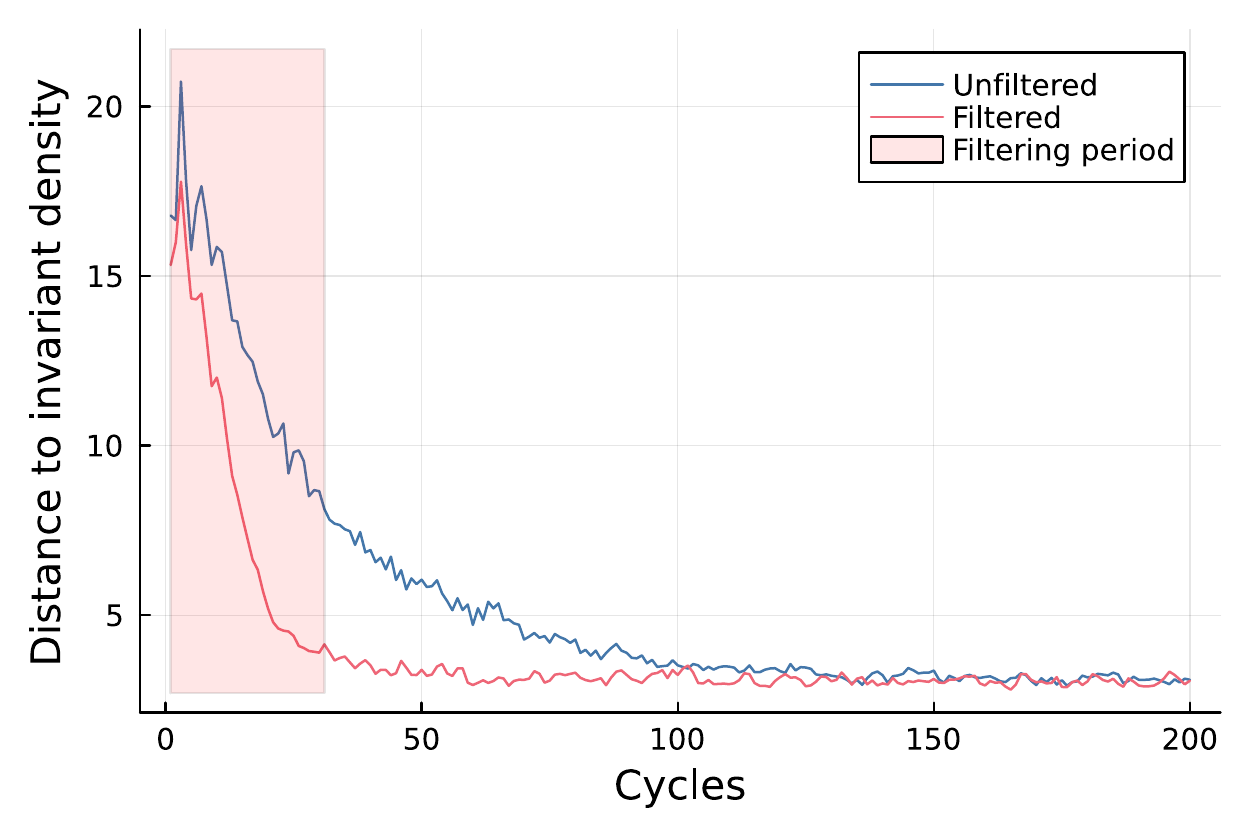}
    \caption{The estimated Wasserstein distance to the invariant density in Lorenz63, in unfiltered and filtered cases. For the filtered case, the first and second moments are assimilated. Each curve is averaged over 10 different initializations.}
    \label{fig:lorenz63_wasserstein}
\end{figure}

Figure~\ref{fig:lorenz63_wasserstein} shows the impact of the EnFPF on the convergence to the invariant density. In this case, we only apply the EnFPF for the first 30 cycles (indicated by the pink rectangle), and then let the ensemble evolve under the regular Lorenz63 dynamics. We see that the EnFPF leads to a more rapid convergence: by the end of the filtering period, the distance is close to the asymptotic one, while it takes at least 100 cycles for the unfiltered case to reach the same.
Figure~\ref{fig:lorenz63_density} visualizes in state space this rapid convergence toward the invariant density via the EnFPF.

\begin{figure*}
    \centering
    \includegraphics[scale=0.75]{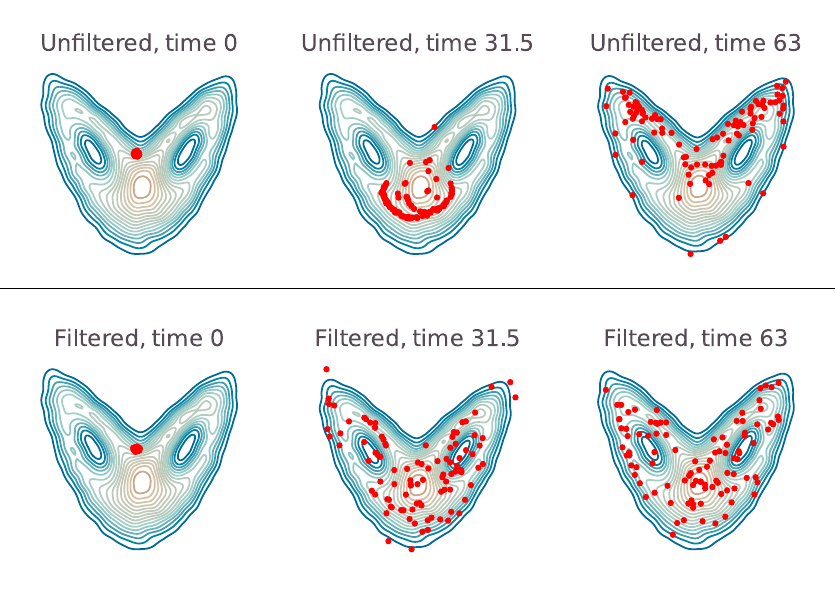}
    \caption{Top panel: an ensemble evolving in time from left to right, superimposed on the invariant density of Lorenz63 in the $x$--$z$ plane. Orange corresponds to higher probability density and blue to lower. Bottom panel: the same but with the EnFPF applied.}
    \label{fig:lorenz63_density}
\end{figure*}

\subsubsection{Impact of Higher-Order Moments}
\begin{figure}
    \centering
    \includegraphics[scale=0.3]{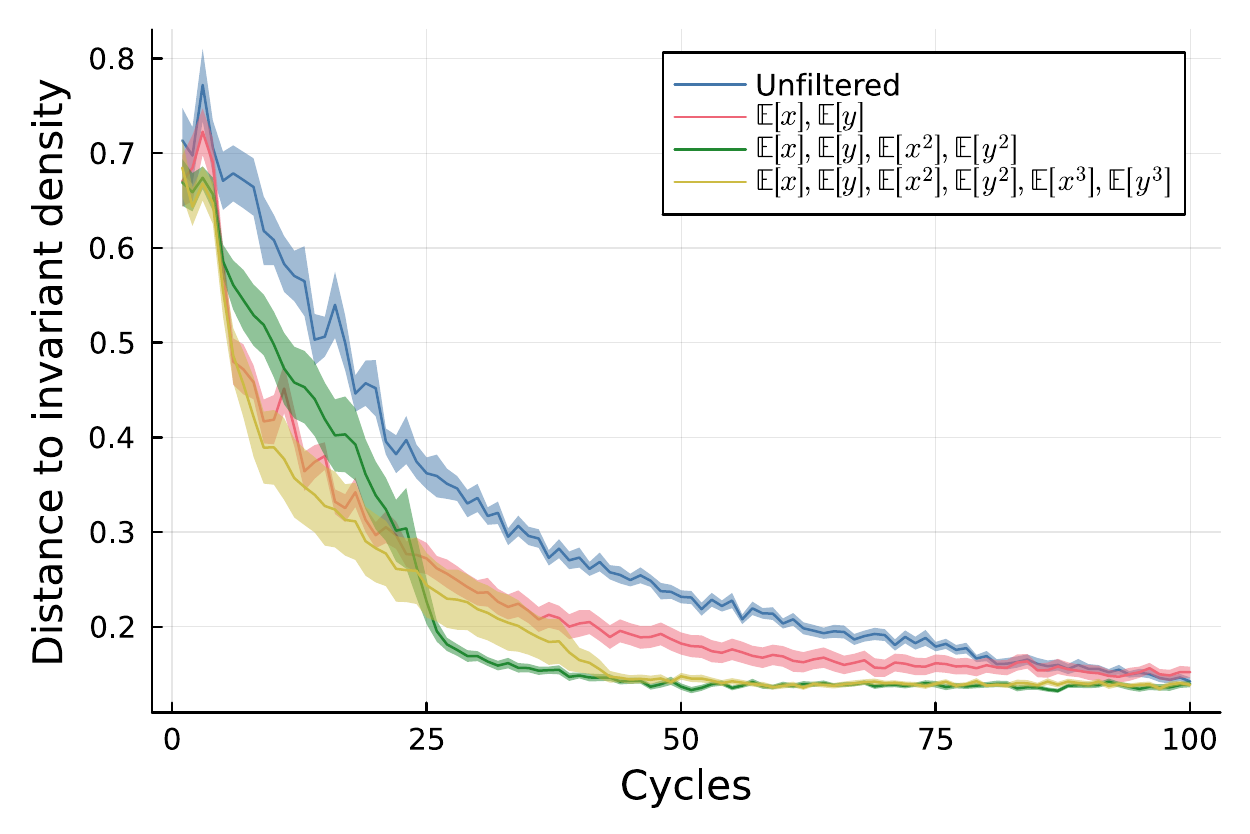}
    \caption{The estimated Wasserstein distance to the invariant density in Lorenz63, in unfiltered and filtered cases when different moments are assimilated. The curves are averaged over 25 initial conditions, and the shaded areas correspond to $\pm$ the standard error over the initializations. Here, for the filtered cases, the EnFPF is applied at every cycle.}
    \label{fig:lorenz63_partial}
\end{figure}

Figure~\ref{fig:lorenz63_partial} shows the convergence to the invariant measure of Lorenz63 with different assimilated moments of $x$ and $y$, namely the first, first and second, and first, second, and third marginal moments. Assimilating the first-order moments accelerates the convergence to the invariant measure compared to the unfiltered case. Adding the second- and third-order moments appears to result in the most rapid initial rate of convergence, and after about 50 cycles assimilating the first two and the first three moments leads to a similar asymptotic distance to the invariant measure. 

\subsection{Lorenz96 Model}

We now test the convergence to the invariant density of the Lorenz (1996)\cite{lorenz_predictability:_1996} model
\begin{equation}
\frac{\text{d}x_i}{\text{d}t} = -x_{i-1}(x_{i-2} + x_{i+1}) - x_i + F,
\end{equation}
where the indices $i$ range from 1 to $D$ and are cyclical. We use $F = 8$ and $D = 40$ variables. This is a model of an atmospheric latitude circle that is {commonly used in data assimilation experiments.}

\begin{figure}
    \centering
    \includegraphics[scale=0.3]{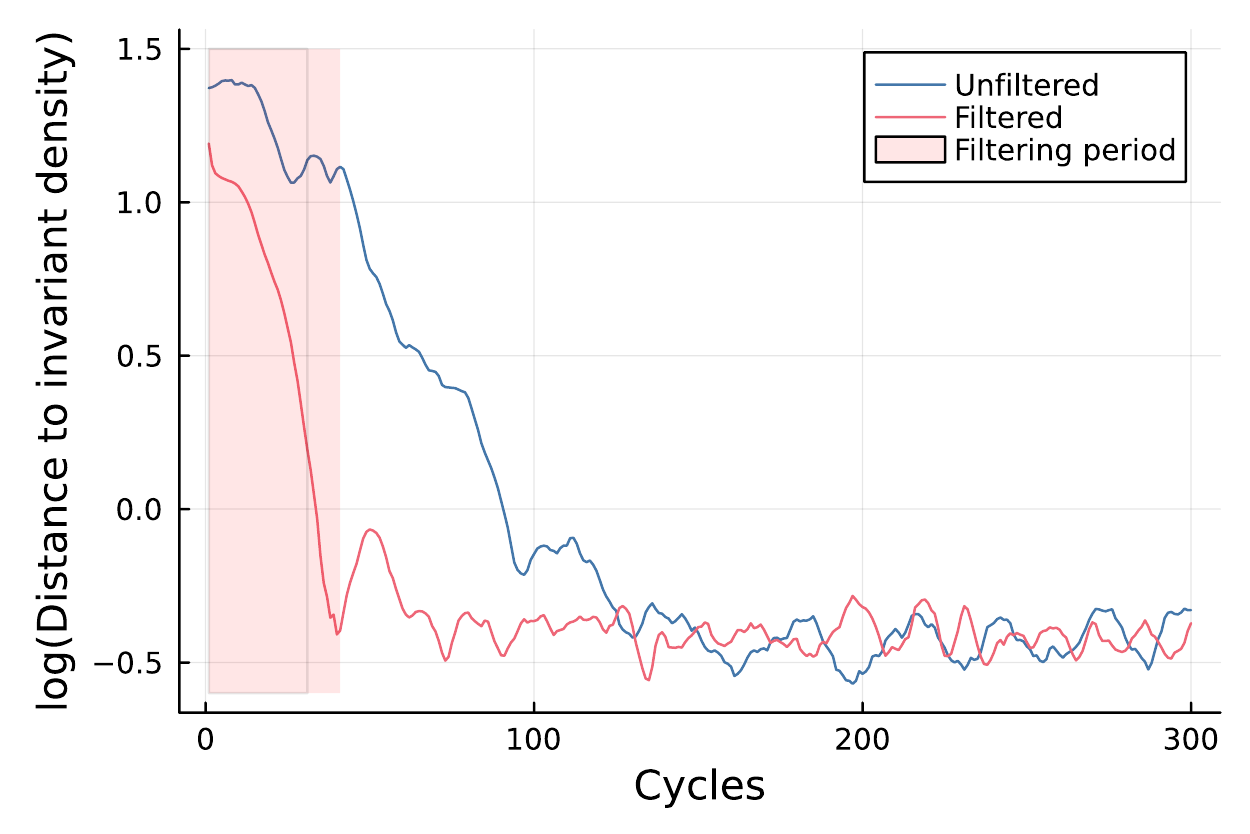}
    \caption{The estimated Wasserstein distance to the invariant density in Lorenz96, in unfiltered and filtered cases. For the filtered case, the first and second moments are assimilated. Here, we show the mean of the Wasserstein distances corresponding to the marginal density for each variable.}
    \label{fig:lorenz96}
\end{figure}

We assimilate the means and second moments of the 40 variables on the invariant density, with an observation error covariance of 20\% of the temporal variability of the statistics computed over a 100-member ensemble. We assimilate every 0.05 time units into a 100-member ensemble for 40 cycles. Figure~\ref{fig:lorenz96} shows that the convergence toward the invariant density is thereby significantly accelerated.

\subsection{Kuramoto--Sivashinsky Model}\label{ssec:ks_exp}

We now carry out experiments with the Kuramoto--Sivashinsky model, a chaotic partial differential equation in one spatial dimension,
\begin{equation}
    u_t + u_{xxxx} + u_{xx} + u u_x = 0 , \quad x \in [0, L].
\end{equation}
We use $L = 22$ and periodic boundary conditions, discretized using 64 Fourier modes (see Sec. \ref{ssec:numerical_methods} for details on the numerical method).

\begin{figure}
    \centering
    \includegraphics[scale=0.3]{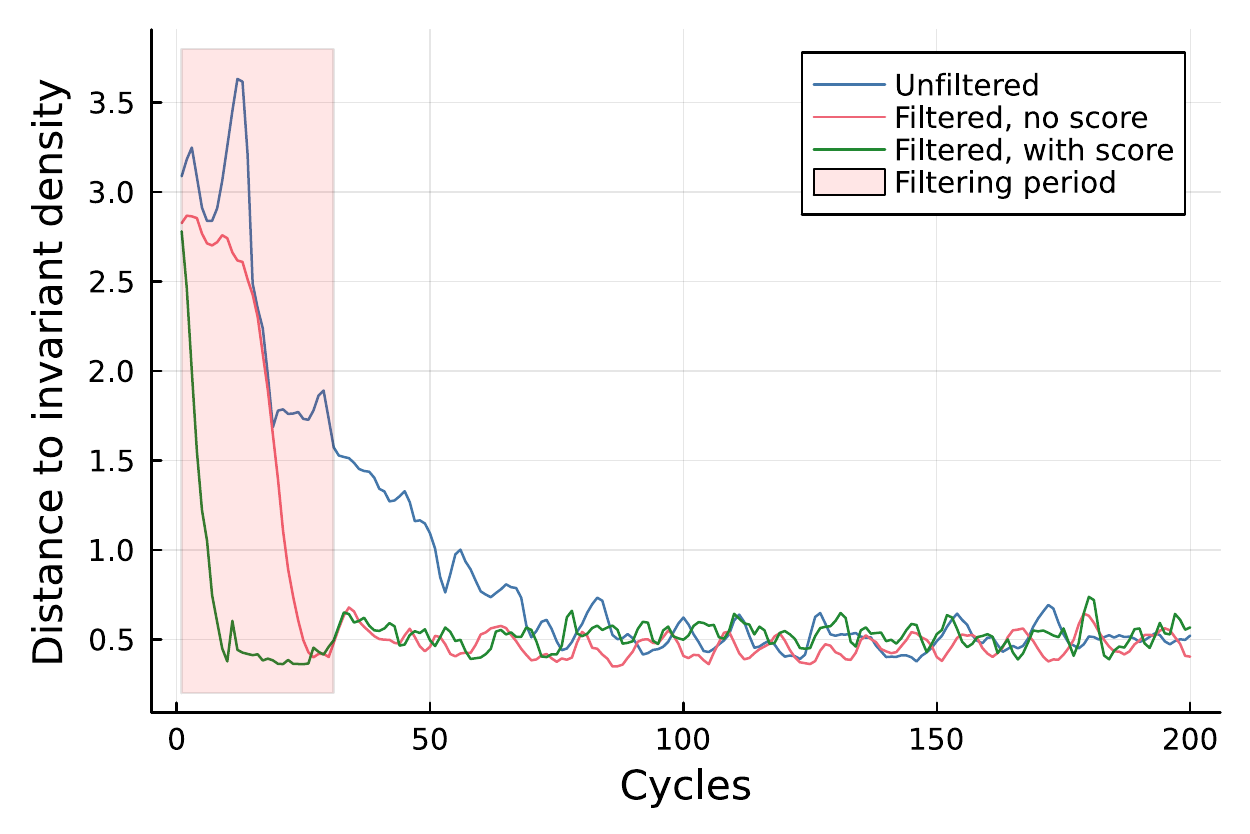}
    \caption{The estimated Wasserstein distance to the invariant density in the Kuramoto--Sivashinsky equation, in unfiltered and filtered cases. For the filtered case, the first and second moments are assimilated. Here, we show the mean of the Wasserstein distances corresponding to the marginal density for each variable.}
    \label{fig:ks}
\end{figure}

We assimilate the means and second moments of the invariant density of the 64 variables in physical space, every 2.0 time units. We assimilate for 30 cycles using a 100-member ensemble, and again use an observational error covariance of 20\% of the temporal variability. Figure~\ref{fig:ks} shows the results with and without the score term included. In both cases, there is an acceleration compared to the unfiltered case; inclusion of the score term considerably accelerates convergence.

\subsection{Time-Dependent Invariant Measures}

We now use the Lorenz63 model (Eq.~\eqref{eq:lorenz63}) but with the $r$ parameter subject to quasiperiodic forcing, as in Daron and Stainforth (2015)\cite{daron_quantifying_2015},
\begin{equation}
r(t) = 28 + \sin \left(2 \pi t\right)+\sin\!\left(\!\sqrt{3} t\right)+\sin\!\left(\!\sqrt{17} t\right).
\end{equation}
Since this system is non-autonomous, it possesses for each time $t$ a pullback attractor with a corresponding time-dependent invariant measure, as discussed in Sec. \ref{ssec:motivation}. The measure at time $t$ can be approximated by the empirical density at time $t$ of an ensemble initialized sufficiently far back in time, at $t - T$ for some large $T$. Here, we evolve a 100-member ensemble using $T=500$ time units to approximate the invariant measures at time $t$. Then, we evolve the ensemble for the additional time period of $t$ to $t + 20$ to obtain approximations to the invariant measures in this period.

We evolve two separate 100-member ensembles for the same time period $t$ to $t+20$, but with $T=0$ (no spin-up). We apply the EnFPF to one of these ensembles and not the other. For the EnFPF, we assimilate every 0.05 time units with an observation error covariance of 20\% of the temporal variability. We then measure the distance between the empirical densities of these two ensembles and the one approximating the invariant measure described in the previous paragraph.

\begin{figure}
    \centering
    \includegraphics[scale=0.3]{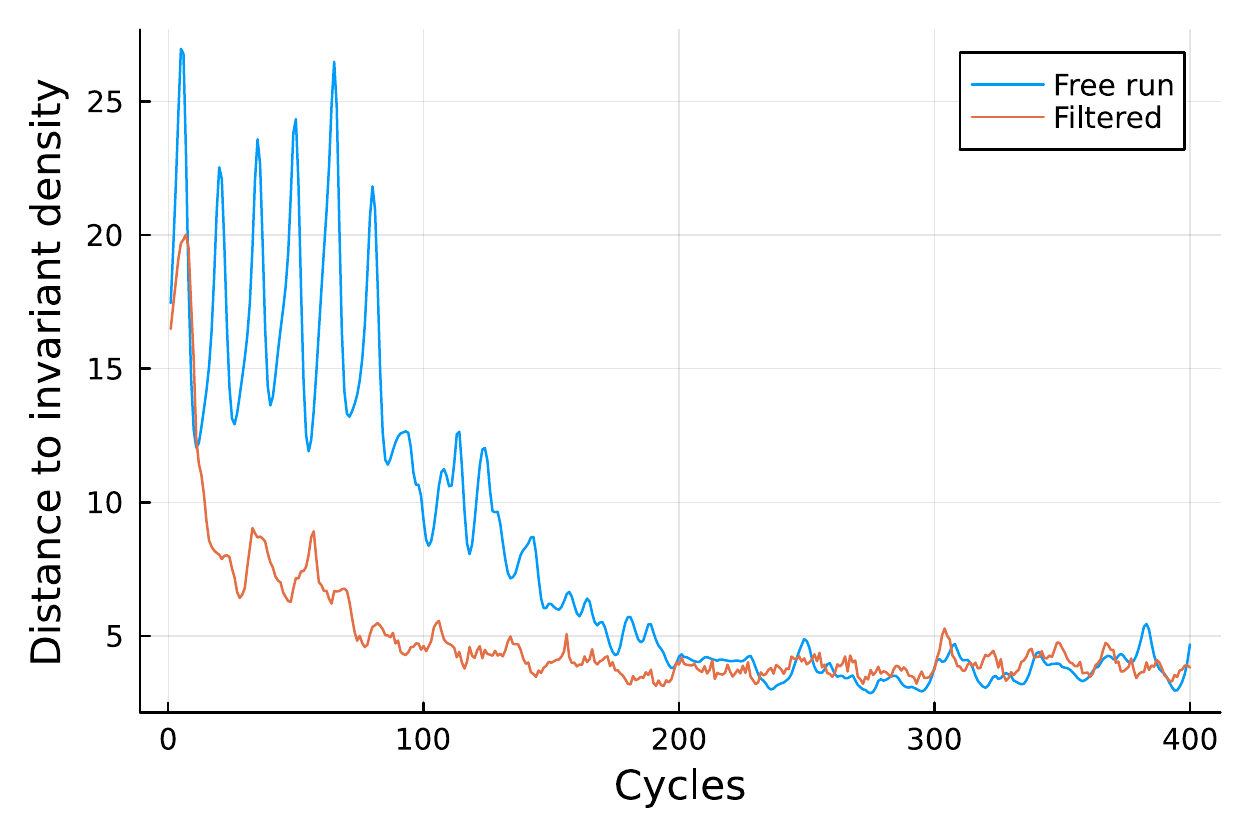}
    \caption{The estimated Wasserstein distance to the invariant density in the non-autonomous Lorenz63 model, in unfiltered and filtered cases. For the filtered case, $\mathbb{E}[x^i]$, $\mathbb{E}[y^i]$, and $\mathbb{E}[z^i]$ for $i=1,2,3$ are assimilated.}
    \label{fig:lorenz63_na}
\end{figure}

Figure~\ref{fig:lorenz63_na} shows that the convergence to the time-dependent invariant measures is indeed accelerated by the EnFPF, reaching a comparable asymptotic distance to the invariant measure in less than half the time.

\section{Justification of Algorithm}\label{sec:justify}

\subsection{Kalman--Bucy (KB) Filter for Densities}\label{ssec:kb}
Since both the Fokker--Planck equation \eqref{eq:fp} and the observation equation \eqref{eq:fp_obs} are linear, and since all noise is additive Gaussian, the conditional probability measure over densities, $\rho | Z^\dagger(t)$, is a Gaussian. This filtering problem can be solved using a Kalman--Bucy filter in Hilbert space, posing significant challenges because it involves finding a sequence of probability measures on an infinite-dimensional space of functions (densities).

We start by defining the Hilbert space $\mathcal{H} = L^2(\mathbb{R}^d, \mathbb{R})$ with inner product
\begin{equation}
    \langle a, b\rangle_{\mathcal{H}}\equiv \int a b\, dv.
\end{equation}
We consider density functions $\rho\in\mathcal{H}$, and we require that $\rho(v, t)\to 0$ as $v\to\infty$. Note that we will sometimes use this inner product in situations where one of the arguments is only locally square integrable; in particular we will need to use the constant function $\mathbbm{1}(v) = 1$. To distinguish them from the Hilbert space inner product, we denote the standard Euclidean inner product in $\mathbb{R}^p$ as $\langle\cdot,\cdot\rangle_{\mathbb{R}^p}$ and the weighted Euclidean inner product, defined for any strictly positive-definite {and symmetric $A \in \mathbb{R}^{p \times p}$,} as $\langle\cdot,\cdot\rangle_A\equiv \langle A^{-1/2}\cdot,A^{-1/2}\cdot\rangle_{\mathbb{R}^p}$.

Recall definition Eq.~(\eqref{eq:fp}b) of the adjoint of the generator $\mathcal{L}$. We are given the dynamics and observation equations \eqref{eq:fp} and \eqref{eq:fp_obs}:
\begin{align}
    d\rho^\dagger(v, t) &= \mathcal{L}^*(t)\rho^\dagger(v, t)\, dt,\\
    dz^\dagger(t) &= H(t)\rho^\dagger(v, t)\, dt + \sqrt{\Gamma(t)}dB.
\end{align}
Then, given all observations up to time $t$, $Z^\dagger(t) = \{z^\dagger(s)\}_{s\in[0,t]}$, the filtering distribution is given by
\begin{equation}
    \rho(\cdot, t) | Z^\dagger(t)\sim \mu(t) \equiv \mathcal{N}\bigl(m(t), C(t)\bigr),
\end{equation}
where $\mathcal{N}$ is a Gaussian measure on $\mathcal{H}$ with mean $m(t)$ and covariance operator $C(t)$. For notational simplicity, we have dropped the explicit dependence of $m(t)$, $C(t)$, and $\rho(t)$ on $v$. Here $C\in L(\mathcal{H}, \mathcal{H})$ is necessarily self-adjoint and trace class \cite{bogachev_gaussian_1998}; that is, $\operatorname{tr}(C) < \infty$. In what follows, the expectation $\mathbb{E}_\mu$ is defined with respect to the measure $\mu$ on the space of $L^2$ densities $\rho$.

Using Theorem 7.10 in Falb (1967)\cite{falb_infinite-dimensional_1967}, the KB filter for this system can be written as
\begin{subequations}
\begin{align}
    dm(t) &= \mathcal{L}^*(t)m(t)\,dt\label{eq:kb_mean}\\
    &\quad\quad\quad+ C(t) H^*(t)\Gamma(t)^{-1}\bigl(dz^\dagger(t) - H(t)m(t)\bigr)\,dt,\nonumber\\
    dC(t) &= \mathcal{L}^*(t) C(t)\, dt + C(t) \mathcal{L}(t)\, dt\label{eq:kb_cov}\\
    &\quad\quad\quad\quad\quad\quad\quad\quad- C(t)H^*(t)\Gamma(t)^{-1}H(t)C(t)\,dt,\nonumber\\
    m(0) &= m_0, C(0) = C_0,\label{eq:kb_init}
\end{align}
\end{subequations}
where
\begin{equation}
    C(t) = \operatorname{cov}({\rho(t)} - m(t), {\rho(t)} - m(t)),
\end{equation}
and
\begin{equation}
    \operatorname{cov}(x, y) \equiv \mathbb{E}_\mu[x\otimes y] - \mathbb{E}_\mu[x]\otimes \mathbb{E}_\mu[y].
\end{equation}
The outer-product $x_1\otimes y_1$ is defined by the identity
\begin{equation}
    (x_1\otimes y_1)x = x_1\langle y_1, x\rangle_{\mathcal{H}}\label{eq:def_circ}
\end{equation}
holding for all $x\in\mathcal{H}$. Note that Falb (1967)\cite{falb_infinite-dimensional_1967} requires boundedness of $\mathcal{L}^*$, but the results have been extended to unbounded operators \citep{curtain_survey_1975}. However, we still require boundedness of $H$. For the rest of the paper, we will assume that $H$ takes the form in Eq.~\eqref{eq:H_def}.

The adjoint operator $H^*$ is then given by
\begin{equation}
    H^*(t) u = \langle\mathfrak{h}(v, t), u\rangle_{\mathbb{R}^p},
\end{equation}
for $u\in\mathbb{R}^p$. Note that, formally, $H^*(t) u$ is to be viewed as a function of $v$, in the space $\mathcal{H}.$

In general, the solution of Eq.~\eqref{eq:kb_mean}, $m(t)$, will not be normalized. However, in Appendix~\ref{sec:kb_properties} we show that normalization is preserved under certain conditions on the initializations $m_0$ and $C_0$ from Eq.~\eqref{eq:kb_init}. However, $m(t)$ is not guaranteed to be non-negative for all $v$ and $t$ and, thus, cannot be a proper probability density. Nonetheless, we can still consider integrals against it.

\subsection{Ansatz and Relation to KB Filter for Densities}\label{ssec:ansatz}

Solving the KB filter equations directly is intractable. We therefore seek an equation that is amenable to a mean-field model, which, in turn, can be approximated by a particle system. We propose the following ansatz for the density of $v|Z^\dagger(t)$:
\begin{equation}
    \frac{\partial \rho}{\partial t} = \mathcal{L}^*{(t)}\rho + \Bigl\langle \mathfrak{h}(v, {t}) - H{(t)}\rho, \frac{dz^\dagger}{dt} - H{(t)}\rho \Bigr\rangle_{\Gamma{(t)}}\rho.\label{eq:ansatz}
\end{equation}

Note the similarity to the Kushner--Stratonovich (KS) equation \eqref{eq:ks}. Although the solutions to this equation do not match the KB filter for densities in general, we show in Theorem~\ref{theorem:equiv} that they coincide in observation space for linear $f$ and $\mathfrak{h}$, under additional assumptions detailed there. The proof sketch is provided in Appendix \ref{appendix:proof}.


\subsection{Mean-Field Approximation}\label{sec:mean_field_intro}

We would now like to find a mean-field model which has, as its FP equation, Eq.~\eqref{eq:ansatz}. We postulate the following form:
\begin{align}
    dv &= f(v, {t}) \, dt + \sqrt{\Sigma{(t)}}\, dW + a(v, \rho, {t})\,dt\label{eq:mean_field_post}\\
    &\qquad + K(v, \rho, {t})\Bigl(dz^\dagger - H{(t)}\rho{(v, t)}\, dt - \sqrt{\Gamma{(t)}} dB\Bigr).\nonumber
\end{align}
Specifically, we aim to choose the pair of functions $(a,K)$ so that the Fokker--Planck equation for $v$ governed by this mean-field model coincides with Eq.~\eqref{eq:ansatz}. In Appendix~\ref{ssec:mean-field}, we detail the choices which achieve this, and after making a further approximation of $K$, we obtain equations \eqref{eq:mean_field} with (\ref{eq:mean_field}a) replaced by \eqref{eq:mean_field_K2}. However, as explained there, 
{in many cases use of Eq.~(\ref{eq:mean_field}a),} which corresponds to setting $a \equiv 0$ and using a simple approximation of $K$, leads to algorithms that empirically perform well.

\section{Conclusions}\label{ssec:conclusions}

In this paper we introduce the Fokker--Planck filtering problem, which consists of estimating the evolving probability density of a (possibly stochastic) dynamical system given noisy observations of expectations evaluated with respect to it. We provide a solution to this problem using the KB filter in Hilbert space, and introduce an ensemble algorithm, the ensemble Fokker--Planck filter (EnFPF), that approximates it under conditions on the dynamics and observables. We also show, through numerical experiments, that this method can be used to accelerate convergence to the invariant measure of dynamical systems, and that this acceleration phenomenon applies beyond the conditions on the dynamics and observables required to provably link the KB filter and the mean-field model underlying our proposed ensemble method.

Future work will test this method on higher-dimensional models, such as turbulent channel flows and ocean models. Other future directions, as described in Sec. \ref{ssec:motivation}, include (i) the testing of this method as an approach to counteract model error, (ii) use in  parameter estimation, and (iii) use in the acceleration of sampling methods such as Langevin dynamics and Markov chain Monte Carlo when some statistics of the target density are known. Furthermore, many of the numerical results require deeper understanding; these include the impact of the assimilation frequency, the score term, and the incorporation of higher-order moments, or other observables, on the filter performance. Finally, on the theoretical side, there is a considerable need for deeper analysis.

\section*{Acknowledgments}
EB is supported by the the Foster and Coco Stanback Postdoctoral Fellowship. AS is supported by the Office of Naval Research (ONR) through grant N00014-17-1-2079. TC and AS acknowledge recent support through ONR grant N00014-23-1-2654. EB and AS are also grateful for support from the Department of Defense Vannevar Bush Faculty Fellowship held by AS. We thank Tapio Schneider, {Dimitris Giannakis}, and two anonymous referees for helpful comments.

\bibliography{references}

\appendix

\section{Properties of the KB Filter for Densities}\label{sec:kb_properties}

Lemma \ref{lemma1} and Remark \ref{remark:normalization} below give the conditions under which $m(t)$ and $\rho(t)\sim\mathcal{N}(m(t), C(t))$ will be normalized. The function $\mathbbm{1}$ is defined as $\mathbbm{1}(v) \equiv 1$ for all $v$.

\begin{lemma}\label{lemma1}
Assume that $\rho(0)\sim \mu(0) = \mathcal{N}(m_0, C_0)$ with
\begin{equation}
    \begin{cases}
        \langle m_0, \mathbbm{1}\rangle_{\mathcal{H}} = 1,\\
        C_0\mathbbm{1} = 0.
    \end{cases}\label{eq:m0C0}
\end{equation}
Then, for $m(t)$ and $C(t)$ satisfying equations \eqref{eq:kb_mean}--\eqref{eq:kb_init},
\begin{enumerate}
    \item[(a)] $C(t)\mathbbm{1} = 0$ for all $t\geq 0$, and
    \item[(b)] $\langle m(t), \mathbbm{1}\rangle_{\mathcal{H}} = 1$ for all $t\geq 0$.
\end{enumerate}
\end{lemma}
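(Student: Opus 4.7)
The plan is to prove part (a) first, and then to deduce part (b) from it. Two structural facts drive the argument: (i) the generator $\mathcal{L}(t)$ annihilates constants, i.e., $\mathcal{L}(t)\mathbbm{1}=0$, since $\mathcal{L}(t)\psi = f(\cdot,t)\cdot\nabla\psi + \tfrac12 \operatorname{tr}\bigl(\Sigma(t)\nabla^2\psi\bigr)$ and both derivatives of $\mathbbm{1}$ are zero; and (ii) $C(t)$ is self-adjoint on $\mathcal{H}$, being a covariance operator.

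For part (a), I would apply the Riccati equation \eqref{eq:kb_cov} to the function $\mathbbm{1}$ and set $\phi(t):=C(t)\mathbbm{1}\in\mathcal{H}$. The resulting equation is
\begin{equation*}
\frac{d\phi}{dt} = \mathcal{L}^*(t)\phi(t) + C(t)\mathcal{L}(t)\mathbbm{1} - C(t)H^*(t)\Gamma(t)^{-1}H(t)\phi(t),
\end{equation*}
and the middle term vanishes by (i), leaving a homogeneous linear evolution equation for $\phi$. Since the hypothesis $C_0\mathbbm{1}=0$ forces $\phi(0)=0$, uniqueness of solutions---which in the authors' abstract Kalman--Bucy setting with unbounded $\mathcal{L}^*$ underwrites the whole framework (Sec.~\ref{ssec:kb})---yields $\phi(t)=0$ for all $t\ge 0$.

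For part (b), take the Hilbert-space pairing of the mean equation \eqref{eq:kb_mean} with $\mathbbm{1}$. The drift piece $\langle \mathcal{L}^*(t)m(t),\mathbbm{1}\rangle_{\mathcal{H}}$ equals $\langle m(t),\mathcal{L}(t)\mathbbm{1}\rangle_{\mathcal{H}}$ by the adjoint relation defining $\mathcal{L}^*$, and this vanishes by (i). The innovation piece is an expression of the form $\langle C(t)H^*(t)\Gamma(t)^{-1}u,\mathbbm{1}\rangle_{\mathcal{H}}$ with $u\in\mathbb{R}^p$; by (ii) this equals $\langle H^*(t)\Gamma(t)^{-1}u,C(t)\mathbbm{1}\rangle_{\mathcal{H}}$, which is zero by part (a). Hence $\langle m(t),\mathbbm{1}\rangle_{\mathcal{H}}$ is constant in $t$, and the hypothesis $\langle m_0,\mathbbm{1}\rangle_{\mathcal{H}}=1$ gives the conclusion.

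The main obstacle is technical rather than conceptual: both $\mathcal{L}(t)\mathbbm{1}=0$ and the adjoint identity $\langle\mathcal{L}^*(t)\psi,\mathbbm{1}\rangle_{\mathcal{H}}=\langle\psi,\mathcal{L}(t)\mathbbm{1}\rangle_{\mathcal{H}}$ must be interpreted formally, since $\mathbbm{1}\notin\mathcal{H}$. One has to justify that the integration-by-parts boundary terms produced against the divergence-form operator $\mathcal{L}^*(t)\psi = -\nabla\cdot(\psi f) + \tfrac12\nabla\cdot(\nabla\cdot(\psi\Sigma))$ vanish at infinity---exactly the setting the authors flag in Sec.~\ref{ssec:kb} when noting that these Hilbert pairings are used even when one argument is only locally square-integrable. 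Under the assumed decay of $\rho$ (hence of $m(t)$) at infinity together with mild growth conditions on $f$ and $\Sigma$, these boundary terms vanish and the formal manipulations become rigorous.
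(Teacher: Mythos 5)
Your argument is correct and follows essentially the same route as the paper's own sketch: part (a) via applying the Riccati equation \eqref{eq:kb_cov} to $\mathbbm{1}$, noting $\mathcal{L}(t)\mathbbm{1}=0$, and invoking uniqueness; part (b) via pairing \eqref{eq:kb_mean} with $\mathbbm{1}$ and using the adjoint relation, self-adjointness of $C(t)$, and part (a). Your closing remarks on the formal status of the pairings with $\mathbbm{1}\notin\mathcal{H}$ are a fair (and slightly more explicit) account of the caveat the paper itself flags in Sec.~\ref{ssec:kb}.
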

\begin{proof}(Sketch)
\begin{itemize}
    \item[(a)] Since $\mathcal{L}\mathbbm{1} = 0$, we have
    \begin{equation}
    \frac{d}{dt}(C\mathbbm{1}) = \mathcal{L}^*C\mathbbm{1} - CH^*\Gamma^{-1}HC\mathbbm{1}.\label{eq:C1}
\end{equation} Assuming uniqueness of the solution to Eq.~\eqref{eq:kb_cov} for the evolution of $C(t)$, we deduce that $C(t)\mathbbm{1} = 0$ solves Eq.~\eqref{eq:C1}.
    \item[(b)] Applying It\^o's lemma to $\langle m, \mathbbm{1}\rangle_{\mathcal{H}}$ (the It\^o correction does not appear due to linearity of the inner product),
    \begin{equation}
    \begin{aligned}
        \frac{d}{dt}\langle m, \mathbbm{1}\rangle_{\mathcal{H}} &= \langle\mathcal{L}^* m, \mathbbm{1}\rangle_{\mathcal{H}} + \left\langle CH^*\Gamma^{-1}(dz^\dagger - Hm), \mathbbm{1}\right\rangle_{\mathcal{H}},\label{eq:rho1}\\
        &= \langle m, \mathcal{L}\mathbbm{1}\rangle_{\mathcal{H}} + \left\langle H^*\Gamma^{-1}(dz^\dagger - Hm), C\mathbbm{1}\right\rangle_{\mathcal{H}},\\
        &= 0,
    \end{aligned}
    \end{equation}
    since $\mathcal{L}\mathbbm{1} = 0$, $C$ is self-adjoint by construction,
    and $C\mathbbm{1} = 0$ by (a). Now assuming uniqueness of the equation~\eqref{eq:kb_mean} for $m(t)$ we find that 
    $\langle m(t), \mathbbm{1}\rangle_{\mathcal{H}} = 1$ solves Eq.~\eqref{eq:rho1}.
\end{itemize}
\end{proof}

\begin{remark}\label{remark}\label{remark:normalization}
    If the conditions in Eq.~\eqref{eq:m0C0} hold then $\langle\rho(t), \mathbbm{1}\rangle = 1$ for $t\geq 0$ almost surely, where $\rho(t)\sim\mu(t)=\mathcal{N}(m(t), C(t))$. This is because
    $\mathbbm{1}$ is in the null-space of both the symmetric operator square-root of $C(t)$, $\sqrt{C(t)}$, and
    \begin{equation}
        \rho(t) = m(t) + \sqrt{C(t)}\xi,
    \end{equation}
    where $\xi\sim\mathcal{N}(0, \mathbb{I})$, with $\mathbb{I}$ being the identity. Thus
    \begin{equation}
    \begin{aligned}
        \langle\rho(t), \mathbbm{1}\rangle_{\mathcal{H}} &= \langle m(t), \mathbbm{1}\rangle_{\mathcal{H}} + \langle\sqrt{C(t)}\xi, \mathbbm{1}\rangle_{\mathcal{H}},\\
        &= 1 + \langle\xi, \sqrt{C(t)}\mathbbm{1}\rangle_{\mathcal{H}},\\
        & = 1.
    \end{aligned}
    \end{equation}
This explains the importance of the conditions in Eq.~\eqref{eq:m0C0}: they ensure that $\rho(t)$ is normalized.
\end{remark}

\section{Theorem~\ref{theorem:equiv}}\label{appendix:proof}

\begin{theorem}
    Assume that:
    \begin{enumerate}
        \item The system dynamics $f$ and $\mathfrak{h}$ are linear in state space: $f(v, {t}) = \mathsf{L}^T v$ and $\mathfrak{h}(v, {t}) = \mathsf{H} v$, with injective $\mathsf{H}$.
        \item $\Sigma = 0$.
        \item $\rho(0)$ is chosen such that its mean $\mathsf{m}(0)$ and covariance $\mathsf{C}(0)$ satisfy 
        \begin{equation}
            \begin{cases}
                \mathsf{H} \mathsf{m}(0) = Hm_0,\\
                \mathsf{H}\mathsf{C}(0)\mathsf{H}^T = HC_0H^*.
            \end{cases}\label{eq:theorem_ic}
        \end{equation}
        \item $m(t)$ stays in the subspace 
        \begin{equation*}
            \mathcal{S} \equiv \left\{u\in \mathcal{H} \,\Bigl| \int |u(v)|v_i v_j dv < \infty\;\forall i, j \in \{1,\ldots, d\}\right\}
        \end{equation*} and $C(t)$ stays in $L(\mathcal{S}, \mathcal{S})$, the space of bounded linear operators from $\mathcal{S}$ into itself.
    \end{enumerate}

    Then, under the same noise realization for $Z^\dagger$, $\mathsf{H} \mathsf{m}(t) = Hm(t)$ and $\mathsf{H}\mathsf{C}(t)\mathsf{H}^T = HC(t)H^*$ will hold for $t\geq 0$, where $\mathsf{m}(t)$ and $\mathsf{C}(t)$ are the mean and covariance of $\rho(t)$ obtained from Eq.~\eqref{eq:ansatz}, and $m(t)$ and $C(t)$ are given by the KB filter for densities \eqref{eq:kb_mean}--\eqref{eq:kb_init}.
\label{theorem:equiv}
\end{theorem}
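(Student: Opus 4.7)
The plan is to reduce the theorem to matching two finite-dimensional Kalman--Bucy SDE systems driven by the common innovation $dz^\dagger$, and then to invoke uniqueness of solutions. First I would introduce observation-relevant moments of the Hilbert-space KB quantities: let $\bar{m}(t) := \int v\,m(v,t)\,dv \in \mathbb{R}^d$, and, viewing $C(t)$ through its kernel $C(v,v')$, let $\tilde{C}(t) \in \mathbb{R}^{d\times d}$ be the second moment $\tilde{C}_{ij}(t) := \iint v_i v'_j\,C(t)(v,v')\,dv\,dv'$. Assumption (4) guarantees these moments exist. Linearity of $\mathfrak{h}$ gives $Hm(t) = \mathsf{H}\bar{m}(t)$ and $HC(t)H^* = \mathsf{H}\tilde{C}(t)\mathsf{H}^T$. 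Using the injectivity of $\mathsf{H}$, together with symmetry of the covariance operators, Assumption~3 then implies the stronger initial equalities $\bar{m}(0) = \mathsf{m}(0)$ and $\tilde{C}(0) = \mathsf{C}(0)$, so it suffices to show that $(\bar{m},\tilde{C})$ and $(\mathsf{m},\mathsf{C})$ satisfy the same finite-dimensional SDE--Riccati system.

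Next I would derive the evolution of $(\mathsf{m},\mathsf{C})$ from the ansatz \eqref{eq:ansatz} by testing both sides against $v$ and $v\otimes v$ and integrating in $v$. Linearity of $f$ and $\mathfrak{h}$, combined with integration by parts on $\mathcal{L}^*\rho$ (valid since $\Sigma = 0$), yields
\begin{align*}
d\mathsf{m} &= \mathsf{L}^T \mathsf{m}\,dt + \mathsf{C}\mathsf{H}^T \Gamma^{-1}\bigl(dz^\dagger - \mathsf{H}\mathsf{m}\,dt\bigr),\\
d\mathsf{C} &= \bigl(\mathsf{L}^T\mathsf{C} + \mathsf{C}\mathsf{L} - \mathsf{C}\mathsf{H}^T\Gamma^{-1}\mathsf{H}\mathsf{C}\bigr)\,dt.
\end{align*}
The covariance equation arises from $\mathsf{C} = \mathbb{E}_\rho[vv^T] - \mathsf{m}\mathsf{m}^T$ via It\^o's formula; the cancellation of the martingale contributions and third-moment terms relies on the Gaussian form of $\rho(t)$, which is preserved by the ansatz because all coefficients multiplying $\rho$ in \eqref{eq:ansatz} are at most linear in $v$.

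In parallel I would derive the evolution of $(\bar{m},\tilde{C})$ directly from the Hilbert-space KB equations \eqref{eq:kb_mean}--\eqref{eq:kb_cov}. Applying the linear functional $\int v\,(\cdot)\,dv$ to \eqref{eq:kb_mean}, using integration by parts on $\mathcal{L}^*m$ and the identity $\int v\,(CH^* u)(v)\,dv = \tilde{C}\mathsf{H}^T u$ for $u\in\mathbb{R}^p$, yields
\begin{align*}
d\bar{m} = \mathsf{L}^T\bar{m}\,dt + \tilde{C}\mathsf{H}^T\Gamma^{-1}\bigl(dz^\dagger - \mathsf{H}\bar{m}\,dt\bigr).
\end{align*}
Applying $\iint v_i v'_j\,(\cdot)\,dv\,dv'$ to the kernel form of \eqref{eq:kb_cov}, and evaluating each of the three operator terms $\mathcal{L}^*C$, $C\mathcal{L}$, and $CH^*\Gamma^{-1}HC$ by integration by parts (once in $v$ and once in $v'$) together with Fubini, produces
\begin{align*}
d\tilde{C} = \bigl(\mathsf{L}^T\tilde{C} + \tilde{C}\mathsf{L} - \tilde{C}\mathsf{H}^T\Gamma^{-1}\mathsf{H}\tilde{C}\bigr)\,dt,
\end{align*}
which is identical to the equation for $\mathsf{C}$. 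With matching initial data and the same driving path $dz^\dagger$, uniqueness of solutions to this coupled finite-dimensional Kalman--Bucy system gives $\mathsf{m}(t) = \bar{m}(t)$ and $\mathsf{C}(t) = \tilde{C}(t)$ for all $t\geq 0$, whence $\mathsf{H}\mathsf{m}(t) = Hm(t)$ and $\mathsf{H}\mathsf{C}(t)\mathsf{H}^T = HC(t)H^*$ as claimed.

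The main obstacle is the reduction of the operator Riccati equation for $C$ to a matrix Riccati equation for $\tilde{C}$. Computing the second moment of each operator-valued term requires interchanging the order of integration and justifying that boundary contributions from integration by parts vanish at infinity; Assumption (4), which places $m(t)$ in $\mathcal{S}$ and $C(t)$ in $L(\mathcal{S},\mathcal{S})$, is precisely the regularity input that makes these manipulations rigorous. A secondary delicate point is the Gaussian-closure step in the $d\mathsf{C}$ derivation on the ansatz side: while linearity of the coefficients propagates Gaussianity of $\rho(0)$ to $\rho(t)$, the ansatz is an SPDE with multiplicative noise, so verifying this invariance and the resulting cancellation of third-moment martingale terms is the most technically subtle piece of the argument.
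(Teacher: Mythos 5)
Your proposal is correct and follows essentially the same strategy as the paper's proof: reduce both the Hilbert-space KB filter and the ansatz to a common finite-dimensional Kalman--Bucy/Riccati system via moment projections and integration by parts (with Assumption 4 killing the boundary terms), then conclude by uniqueness of that system; you also correctly flag the Gaussian-closure step behind the $d\mathsf{C}$ equation, which the paper handles by citation to Jazwinski. The only organizational difference is that you compare the full state-space moments $\bar{m}(t)=\int v\,m\,dv$ and $\tilde{C}(t)$ directly with $\mathsf{m}(t),\mathsf{C}(t)$, using injectivity of $\mathsf{H}$ once at $t=0$, whereas the paper compares the observation-space projections $Hm$ and $HCH^*$ with $\mathsf{H}\mathsf{m}$ and $\mathsf{H}\mathsf{C}\mathsf{H}^T$ and invokes injectivity at each time $t^*$ to identify the two vector fields.
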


\begin{proof}(Sketch)

We give here the outlines of a proof, but a rigorous proof, as well as analysis of whether the equivalence holds in any setting more general than the above restrictive conditions, will require considerably more work.

We consider the evolution of the mean and covariance of the KB filter for densities (Eqs.~\eqref{eq:kb_mean} and \eqref{eq:kb_cov}) projected into observation space,
\begin{subequations}
\begin{align}
    d(Hm) &= H\mathcal{L}^*m\,dt + H C H^*\Gamma^{-1}(dz^\dagger - Hm\,dt),\label{eq:dH_rho}\\
    d(HCH^*) &= H\mathcal{L}^* C H^*\,dt + HC \mathcal{L}H^*\, dt\nonumber\\
    &\quad\quad\quad\quad\quad\quad- HCH^*\Gamma^{-1}HCH^*\,dt,\label{eq:dHCH*}
\end{align}
\end{subequations}
where $H(t) = H$ is not time-dependent because $\mathfrak{h}(v, t) = \mathfrak{h}(v) = \mathsf{H}v$. These equations now describe the time evolution of the finite-dimensional quantities $Hm$ and $HCH^*$.

Now, imposing $f(v, {t}) = \mathsf{L}^T v$ and $\mathfrak{h}(v, {t}) = \mathsf{H} v$ on the ansatz (Eq.~\eqref{eq:ansatz}), the time evolution of $\rho$ can be entirely characterized by its mean and covariance, and we obtain the following equations for them:
\begin{subequations}
\begin{align}
    d\mathsf{m} &= \mathsf{L}^T \mathsf{m}\,dt + \mathsf{C} \mathsf{H}^T\Gamma^{-1}(dz^\dagger - \mathsf{H}\mathsf{m}\,dt),\\
    d\mathsf{C} &= \mathsf{L}^T \mathsf{C}\,dt + \mathsf{C} \mathsf{L}\,dt - \mathsf{C} \mathsf{H}^T\Gamma^{-1} \mathsf{H} \mathsf{C}\,dt,\label{eq:dP}
\end{align}
\end{subequations}
where $\mathsf{m} \equiv \mathbb{E}[v]$ and $\mathsf{C}\equiv \mathbb{E}[(v - \mathsf{m})(v - \mathsf{m})^T]$. A similar calculation is made in, e.g., section 7.4 of Jazwinski (1970)\cite{jazwinski_stochastic_1970}. In observation space, we have that
\begin{subequations}
\begin{align}
    d(\mathsf{H} \mathsf{m}) &= \mathsf{H} \mathsf{L}^T \mathsf{m}\,dt + \mathsf{H} \mathsf{C} \mathsf{H}^T\Gamma^{-1}(dz^\dagger - \mathsf{H}\mathsf{m}\,dt),\label{eq:dBv}\\
    d(\mathsf{H} \mathsf{C} \mathsf{H}^T) &= \mathsf{H} \mathsf{L}^T \mathsf{C} \mathsf{H}^T\,dt + \mathsf{H} \mathsf{C} \mathsf{L} \mathsf{H}^T\,dt\nonumber\\
    &\quad\quad\quad\quad\quad\quad- \mathsf{H}\mathsf{C} \mathsf{H}^T\Gamma^{-1} \mathsf{H} \mathsf{C} \mathsf{H}^T\,dt\label{eq:dBPB}.
\end{align}
\end{subequations}

We would now like to show that $\mathsf{H} \mathsf{m}(t) = Hm(t)$ and $\mathsf{H}\mathsf{C}(t)\mathsf{H}^T = HC(t)H^*$ for all $t\geq 0$. We do this by showing that the RHS of Eqs.~\eqref{eq:dH_rho} and \eqref{eq:dHCH*} are equal to the RHS of Eqs.~\eqref{eq:dBv} and \eqref{eq:dBPB} at time $t^*$ if $\mathsf{H} \mathsf{m}(t^*) = Hm(t^*)$ and $\mathsf{H}\mathsf{C}(t^*)\mathsf{H}^T = HC(t^*)H^*$. Together with the initial conditions \eqref{eq:theorem_ic} and uniqueness, this proves the theorem.

It follows immediately that
\begin{equation}
\begin{aligned}
    &H C(t^*) H^*\Gamma^{-1}\!\left[\frac{dz^\dagger}{dt} - Hm(t^*)\right]\\
    &\qquad\qquad= \mathsf{H} \mathsf{C}(t^*) \mathsf{H}^T\Gamma^{-1}\!\left[\frac{dz^\dagger}{dt} - \mathsf{H}\mathsf{m}(t^*)\right],
\end{aligned}
\end{equation}
and that
\begin{equation}
    HC(t^*)H^*\Gamma^{-1}HC(t^*)H^* = \mathsf{H}\mathsf{C}(t^*) \mathsf{H}^T\Gamma^{-1} \mathsf{H} \mathsf{C}(t^*) \mathsf{H}^T.
\end{equation}

Note that
\begin{equation}
\begin{aligned}
    \mathsf{H}\mathsf{m}(t^*) &= Hm(t^*),\\
    &= \mathsf{H}\int vm(t^*)\,dv,
\end{aligned}
\end{equation}
which implies that
\begin{equation}
    \mathsf{m}(t^*) = \int vm(t^*)\,dv,\label{eq:vm}
\end{equation}
because $\mathsf{H}$ was assumed to be injective.

We proceed with the rest of the terms. For the first term of the RHS of Eq.~\eqref{eq:dH_rho},
\begin{equation}
\begin{aligned}
    H\mathcal{L}^*m &= \mathsf{H} \int v \mathcal{L}^*m\,dv,\\
    &= -\mathsf{H} \int v \nabla\cdot(m f)\,dv,\\\label{eq:HL*_rho}
    &= -\mathsf{H}\mathsf{L}^T\int v \nabla\cdot(m v)\,dv,\\
    &= \mathsf{H}\mathsf{L}^T\int v m\,dv,\\
    &= \mathsf{H}\mathsf{L}^T \mathsf{m},
\end{aligned}
\end{equation}
where the fourth line follows from integration by parts and the last from Eq.~\eqref{eq:vm}. Note that the boundary term in the integration by parts vanishes from assumption 4. Thus,
\begin{equation}
    H\mathcal{L}^*m = \mathsf{H}\mathsf{L}^T \mathsf{m}.
\end{equation}

It remains to show that $H\mathcal{L}^* C(t^*) H^* = \mathsf{H} \mathsf{L}^T \mathsf{C}(t^*) \mathsf{H}^T$. We have that for any $u$,
\begin{equation}
    H C(t^*) H^* u = \mathsf{H}\int v C(t^*) H^*u\,dv = \mathsf{H} \mathsf{C}(t^*) \mathsf{H}^T u.
\end{equation}

Since $\mathsf{H}$ was assumed to be injective,
\begin{equation}
    \int v C(t^*) H^*u \,dv = \mathsf{C}(t^*) \mathsf{H}^T u.\label{eq:int_identity}
\end{equation}

Then, for any $w$,
\begin{equation}
\begin{aligned}
    H\mathcal{L}^* C(t^*) H^*w &= \mathsf{H}\int v \mathcal{L}^*C(t^*)H^* w \,dv\\
    &= -\mathsf{H}\int v\nabla\cdot(C(t^*)H^*w \mathsf{L}^T v)\,dv\\
    &= \mathsf{H} \mathsf{L}^T\int v C(t^*)H^*w\,dv\\
    &= \mathsf{H}\mathsf{L}^T \mathsf{C}(t^*) \mathsf{H}^T w
\end{aligned}
\end{equation}
where the third line follows from integration by parts (with the boundary term vanishing by the same argument as above) and the last line from Eq.~\eqref{eq:int_identity}. Taking the adjoint demonstrates that $HC(v, t^*) \mathcal{L}H^* = \mathsf{H} \mathsf{C}(t^*) \mathsf{L} \mathsf{H}^T$, completing the proof.
\end{proof}

\section{Mean-Field Approximation}\label{ssec:mean-field}

We omit the function arguments until the end of the subsection, for brevity.
Using Eq.~3.30 from Calvello, Reich, and Stuart (2022)\cite{calvello_ensemble_2022}, we know that the FP equation of Eq.~\eqref{eq:mean_field_post} when $f(v, {t}) = 0$ and $\Sigma = 0$ is
\begin{align}
    \frac{\partial \rho}{\partial t} &= -\nabla\cdot(\rho (a-KH\rho)) - \left\langle \nabla\cdot(\rho K^T), \frac{dz^\dagger}{dt}\right\rangle\nonumber\\
    &\quad\quad\quad\quad\quad\quad\quad\quad\quad+ \nabla\cdot\bigl(\nabla\cdot(\rho K \Gamma K^T)\bigr).\label{eq:mean_field_fp}
\end{align}

We now match the terms of Eqs.~\eqref{eq:mean_field_fp} and \eqref{eq:ansatz} to make them equal. By matching the terms involving $dz^\dagger/dt$, we obtain that
\begin{equation}
    \Gamma^{-1}({\mathfrak{h}} - H\rho) \rho = -\nabla\cdot(\rho K^T),\label{eq:K_eq}
\end{equation}
and matching the rest of the terms,
\begin{equation}
    -\rho \langle {\mathfrak{h}} - H\rho, H\rho\rangle_{\Gamma}  = -\nabla\cdot(\rho(a - KH\rho)) + \nabla\cdot(\nabla\cdot(\rho K \Gamma K^T)).\label{eq:matched}
\end{equation}
Substituting Eq.~\eqref{eq:K_eq} into Eq.~\eqref{eq:matched}, we obtain
\begin{equation}
\begin{aligned}
    \langle \nabla\cdot(\rho K^T), H\rho\rangle &= \nabla\cdot(\rho K H\rho)\\
    &= -\nabla\cdot(\rho(a - KH\rho))\\\
    &\quad\quad\quad+ \nabla\cdot(\nabla\cdot(\rho K \Gamma K^T)).
\end{aligned}
\end{equation}

Setting the term in the divergence to 0, we obtain
\begin{equation}
    a = K\Gamma K^T \nabla \log \rho.
\end{equation}
This is the origin of the score function term discussed in subsection~\ref{ssec:score}.

We propose a test function $\psi(v) = v - \mathbb{E}v$, take the outer product of it with both sides of Eq.~\eqref{eq:K_eq}, and integrate by parts, obtaining the identity
\begin{equation}
    \mathbb{E}K = \mathbb{E}[\psi ({\mathfrak{h}} - H\rho)^T]\Gamma^{-1} = C^{v\mathfrak{h}}\Gamma^{-1},
\end{equation}
where $C^{v\mathfrak{h}}{(t)}\equiv \mathbb{E}[(\mathfrak{h}(v, {t}) - H\rho)(\mathfrak{h}(v, {t}) - H\rho)^T]$.

Fixing the value of the gain $K$ to its expectation (the constant gain approximation discussed in Calvello, Reich, and Stuart (2022)\cite{calvello_ensemble_2022}), we then obtain
\begin{equation}
    K{(t)} = C^{v\mathfrak{h}}{(t)}\Gamma{(t)}^{-1}.
\end{equation}
Thus, the mean-field model is
\begin{align*}
    dv &= f(v, {t}) \, dt + \sqrt{\Sigma{(t)}}\, dW + K{(t)}\Bigl(dz^\dagger - d\hat{z}\Bigr)\\
    &\quad\quad\quad\quad\quad\quad\quad\quad+ K{(t)}\Gamma{(t)} K{(t)}^T \nabla \log \rho{(v, t)}\,dt,\\
    d\hat{z} &= {(\mathbb{E}\mathfrak{h})(t)}\, dt + \sqrt{\Gamma{(t)}} dB,
\end{align*}
which gives Eqs.~\eqref{eq:mean_field}, with (\ref{eq:mean_field}a) replaced by \eqref{eq:mean_field_K2}.
\end{document}